\documentclass[journal]{IEEEtran}

\IEEEoverridecommandlockouts

\usepackage{textcomp}
\usepackage{xcolor}
\usepackage{graphicx}
\usepackage{epstopdf}
\usepackage{cite}
\usepackage{amsmath}
\usepackage{amssymb}
\usepackage{amsfonts}
\usepackage{array}
\usepackage{color}
\usepackage{algorithm,algorithmic}
\usepackage{booktabs}
\usepackage{multirow}
\usepackage{url}
\usepackage{amsthm}
\usepackage{subcaption}
\usepackage{bm,soul,enumitem}
\usepackage{dsfont}
\usepackage{placeins}

\newtheorem{proposition}{Proposition}
\newtheorem{theorem}{Theorem}

\newtheorem{lemma}{Lemma}
\newtheorem{remark}{Remark}

\newcommand{\C}{\mathbb{C}}
\newcommand{\R}{\mathbb{R}}

\def\BibTeX{{\rm B\kern-.05em{\sc i\kern-.025em b}\kern-.08em
    T\kern-.1667em\lower.7ex\hbox{E}\kern-.125emX}}

\begin{document}

\title{Sensing-Induced Embodied Communication \\ in the Near Field}

\author{Jingreng Lei and Yulin Shao%
\thanks{J. Lei and Y. Shao are with the Department of Electrical and Computer Engineering, The University of Hong Kong, Hong Kong S.A.R. (e-mail: leijr@eee.hku.hk; ylshao@hku.hk).
}
}

\maketitle

\begin{abstract}
Integrated sensing and communication is turning the cellular infrastructure into an active observer of the physical world. 
When such infrastructure interacts with embodied agents capable of deliberately changing their states and surroundings, the physical world itself can become a communication medium. 
This paper studies the fundamental communication limits of this sensing-induced embodied communication paradigm in the near field. 
We consider an agent that maps messages to the positions of a controllable scatterer within a bounded three-dimensional (3D) region, while a base station decodes the selected position from multi-snapshot monostatic sensing echoes. 
Near-field spherical wavefronts resolve both angle and range, expanding the embodied-symbol space from a 2D plane to a 3D volume. 
This gain, however, comes with a position-dependent and anisotropic reliability geometry. 
We characterize this geometry through the pairwise Bhattacharyya distance and derive a local ellipsoidal representation of the resulting 3D confusability regions, whose principal axes quantify directional sensing resolution. 
The ellipsoid further degenerates into the 2D transverse ellipse in the far-field limit, unifying the two regimes. 
We then formulate the finite-snapshot $\epsilon$-capacity and translate reliable codebook design into a 3D packing problem. 
A face-centered cubic construction provides an achievable rate, while a geometric converse yields a complementary upper bound. 
Numerical results validate the proposed geometry and demonstrate the capacity gain of near-field volumetric packing over far-field planar packing. These results establish a unified geometric and information-theoretic framework for communication through deliberately configured physical states.
\end{abstract}

\begin{IEEEkeywords}
Embodied communication, 
ISAC, 
near-field sensing, 
statistical distinguishability, 
capacity bounds.
\end{IEEEkeywords}

\section{Introduction}
\label{sec:introduction}

Wireless systems are being reshaped by two converging developments. 
On the infrastructure side, integrated sensing and communication (ISAC) \cite{TR22.837} is equipping cellular infrastructure with the ability to continuously probe and perceive the physical environment~\cite{liuSensingWithCommunicationSignals2026,gonzalezPrelcicTheIntegratedSensing2024,zhang2024optical,zhangAnOverviewOfSignalProcessing2021,chafiiTwelveScientificChallenges2023}. 
On the terminal side, embodied AI \cite{savva2019habitat,cuiLLMindOrchestratingAI2025} is expanding the class of networked devices to physically situated agents, such as robots, autonomous vehicles, and drones~\cite{wangISAC2024,shaoATheorySemanticCommunication2024,wuComprehensiveOverviewOn5GAndBeyond2021}.
Beyond exchanging digital data, these agents can move, reconfigure themselves, manipulate objects, and deliberately alter their surroundings.

When sensing infrastructure interacts with action-capable agents, these physical degrees of freedom can support not only control and task execution, but also communication.
By deliberately selecting a physical state, an agent can encode the intended information that the sensing infrastructure recovers from its observations.
This convergence gives rise to a new communication paradigm: \emph{sensing-induced embodied communication}~\cite{shaoEmbodiedCommunication2026}.

Embodied communication differs fundamentally from both conventional wireless communication and conventional sensing.
Unlike conventional communication, the agent need not encode information into an actively generated radio-frequency (RF) waveform~\cite{weiOrthogonalTimeFrequencySpace2021,ozdemirChannelEstimation2007,shao2021federated,cao2026stepped}.
Unlike conventional sensing, the objective is not to estimate an arbitrary unknown state as accurately as possible~\cite{liuRISAssistedJointSensing2026,panRISAidedNearFieldLocalization2023,wangCooperativeISAC2026}, but to select a set of physically realizable states that can be reliably distinguished from one another.
The communication alphabet is therefore determined jointly by the physical states that the agent can realize and the sensing resolution available to the base station (BS).

This paradigm is particularly relevant when explicit RF transmission is infeasible, undesirable, or redundant.
For example, an agent subject to radio-silence or emission constraints can convey information without generating an RF waveform of its own.
It can also benefit hardware-constrained agents or enable information to be embedded into physical actions already required for task execution.
Although the underlying principle is agnostic to the sensing modality and may apply to visual, acoustic, LiDAR, or multimodal observations, this paper focuses on its RF realization enabled by cellular ISAC.

Existing information-theoretic work on embodied communication considered an agent-controllable region in the far field of the BS~\cite{shaoEmbodiedCommunication2026}.
In this scenario, each message is mapped to a predetermined position of a controllable scatterer, and the BS decodes the message by identifying the selected position from its sensing echoes.
Under the far-field planar-wave propagation, however, the BS array resolves only the scatterer's arrival direction but not its range.
Line-of-sight displacements are therefore indistinguishable, restricting the embodied-symbol space to a two-dimensional (2D) transverse plane.
Moreover, pairwise distinguishability of two controllable positions depends only on the transverse displacement, reducing codebook design to a translation-invariant 2D packing problem.

This far-field assumption becomes inadequate as cellular systems deploy increasingly large antenna arrays in 6G networks~\cite{wangExtremelyLargeScaleMIMO2024,liu2025sensing}.
The corresponding expansion of the Rayleigh distance places relevant controllable regions in the  near field, where spherical wavefront curvature provides both angular and range information~\cite{zhangNearFieldBoundaryDistance2026,leiUnifiedDistributedAlgorithm2026,Cui1,zhangFastNearFieldBeamTraining2022}.
Line-of-sight displacements then become resolvable, allowing all three spatial coordinates to carry information and expanding the embodied-symbol space from a 2D plane to a 3D volume.
This additional spatial degree of freedom raises the central question: \emph{What is the fundamental limit of embodied communication in the near field?}

We formalize this limit through the finite-snapshot $\epsilon$-capacity of the near-field embodied channel, defined from the largest reliable embodied alphabet supported by a finite number of noisy sensing snapshots under a prescribed decoding-error constraint.
The main difficulty lies in the nonuniform reliability geometry induced by near-field propagation.
Unlike the far field, where distinguishability depends only on transverse displacement, near-field distinguishability depends on both the absolute state location and the displacement direction.
The resulting geometry is 3D, position dependent, and anisotropic, precluding a uniform separation rule throughout the controllable region.
Characterizing this geometry and translating it into tractable codebook designs and capacity bounds constitute the central technical challenge of this paper.

Our main contributions are summarized as follows:
\begin{itemize}[leftmargin=0.45cm]
    \item We establish a near-field embodied communication model in which an agent encodes its message into the position of a controllable scatterer within a bounded 3D region, while the BS decodes the message from multi-snapshot monostatic sensing echoes. We then formulate the finite-snapshot $\epsilon$-capacity based on the largest embodied alphabet satisfying a prescribed maximum decoding-error probability.
    \item We quantify the pairwise distinguishability of physical states using the Bhattacharyya distance between their induced sensing distributions and characterize the corresponding 3D confusability regions. Unlike the translation-invariant far-field geometry, these regions vary with the anchor position and are bounded along all three spatial dimensions.
    \item We derive a local anisotropic ellipsoidal representation of each confusability region. Its principal directions and half-axis lengths quantify the sensing resolution along different spatial directions and determine the required separation between embodied symbols. We further show that, as the controllable region moves into the far field, the 3D ellipsoid degenerates into a 2D transverse ellipse, thereby unifying the near-field and far-field reliability geometries.
    \item Using the ellipsoidal reliability geometry, we transform embodied-symbol selection into a 3D packing problem and construct a face-centered cubic (FCC) codebook in the transformed space, yielding an achievable rate and a lower bound on the finite-snapshot $\epsilon$-capacity. We also derive a complementary geometric upper bound. Numerical results validate the proposed geometry and demonstrate the gain of near-field volumetric packing over far-field planar packing.
\end{itemize}


\section{System Model and Problem Formulation}
\label{sec:system}

\subsection{System Model}
\begin{figure}[t]
    \centering
\includegraphics[width=0.9\linewidth]{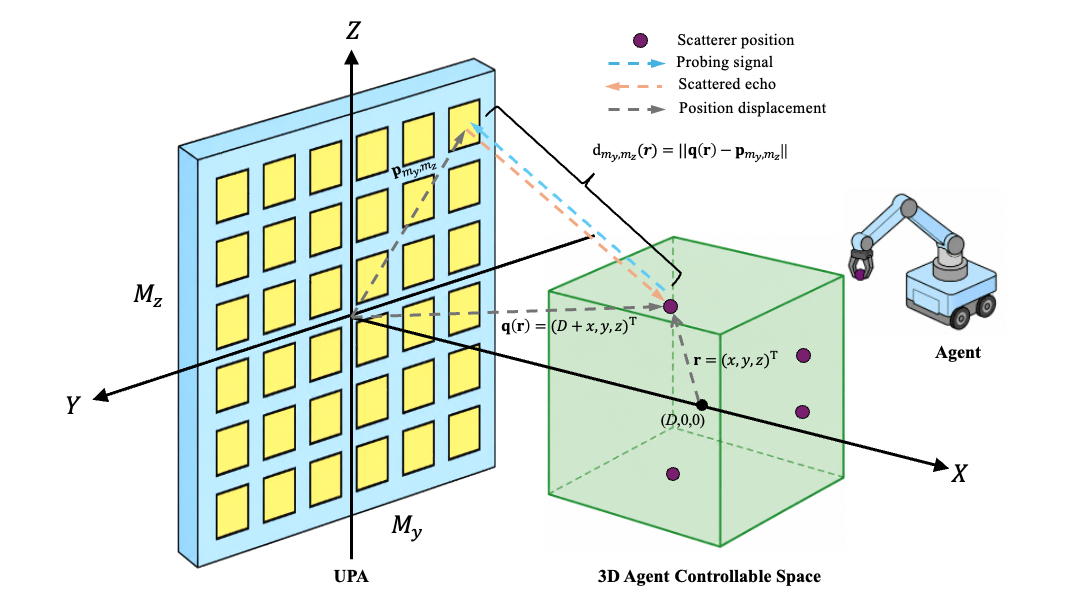}
    \caption{System model of the near-field embodied communication. The agent encodes its message into the physical position of a controllable scatterer within a 3D controllable region. The BS sends probing signals and decodes the intended message by sensing the corresponding echoes from the selected physical state.}
    \label{fig:system_model}
\end{figure}  

We consider a cellular ISAC system comprising a BS and an embodied agent, as shown in Fig.~\ref{fig:system_model}. The BS is equipped with a uniform planar array (UPA), and decodes information from RF sensing observations of a physical state of the agent. The UPA lies in the \(yz\)-plane and is centered at the origin, with its broadside direction aligned with the positive \(x\)-axis. It contains \(M_y\times M_z\) isotropic antenna elements with inter-element spacing \(d=\lambda/2\), where \(\lambda\) is the carrier wavelength. The \((m_y,m_z)\)-th element is located at
\begin{align}
\mathbf p_{m_y,m_z}
\!\!=\!\!
\left[
0,\!
\left(m_y\!-\!\frac{M_y\!-\! 1}{2}\right)d,\! 
\left(m_z\! -\! \frac{M_z\! -\! 1}{2}\right)d
\right]^{\mathrm T},
\label{eq:upa_element_position}
\end{align}
where \(m_y=0,\ldots,M_y-1\) and \(m_z=0,\ldots,M_z-1\).

Unlike conventional uplink communication, the agent does not activate a dedicated RF transmitter~\cite{Zou2024,xiongFundamentalTradeoffIntegrated2023,gaoMultiViewImaging2026}. Instead, it communicates by placing a controllable point scatterer at one of several admissible positions inside a cubical 3D region centered at \((D,0,0)\), where \(D\) is the distance from the BS array center to the center of the controllable region. The admissible displacement set relative to this controllable region center is defined as
\begin{align*}
\mathcal A
=
\left[-\frac{a_x}{2},\frac{a_x}{2}\right]
\times
\left[-\frac{a_y}{2},\frac{a_y}{2}\right]
\times
\left[-\frac{a_z}{2},\frac{a_z}{2}\right]
\subset \R^3,
\end{align*}
where \(a_x\), \(a_y\), and \(a_z\) are the extents along the \(x\), \(y\), and \(z\) axes, respectively. Let \(\mathbf r=(x,y,z)^{\mathrm T}\in\mathcal A\) denote the displacement of the scatterer from the controllable region center, and then the absolute scatterer location is \(\mathbf q(\mathbf r)=[D+x,\,y,\,z]^{\mathrm T}\). To convey information, the agent encodes its message into the physical location of the controllable scatterer. Specifically, a finite set of positions \(\mathcal C=\{\mathbf r_1,\mathbf r_2,\ldots,\mathbf r_J\}\subseteq\mathcal A\) is agreed upon as a \(J\)-ary embodied codebook.
To send the message \(W=j\in\{1,2,\ldots,J\}\) during one embodied channel use, the agent places the scatterer at the corresponding position \(\mathbf r_j\). Information is thus carried by where the scatterer is rather than by any transmitted waveform, realizing the deterministic position-encoding mapping \(W=j\mapsto\mathbf r_j\in\mathcal C\).

The entire controllable region is assumed to lie in the near field of the UPA. Let \(d_{m_y,m_z}(\mathbf r)=\|\mathbf q(\mathbf r)-\mathbf p_{m_y,m_z}\|_2\) denote the propagation distance from the scatterer at \(\mathbf q(\mathbf r)\) to the \((m_y,m_z)\)-th element of the array, and let \(d_0(\mathbf r)=\|\mathbf q(\mathbf r)\|_2\) denote the distance to the array center, and then the $(m_y,m_z)$-th entry of the one-way UPA response is given by
\begin{align}
[\mathbf A(\mathbf r)]_{m_y,m_z}
=
\frac{1}{\sqrt M}
\exp\!\left(
-j\frac{2\pi}{\lambda}
\big(d_{m_y,m_z}(\mathbf r)-d_0(\mathbf r)\big)
\right),
\label{eq:near_field_array_response}
\end{align}
where $M=M_yM_z$ is the total number of the antenna elements, and \(\mathbf A(\mathbf r)\in\mathbb{C}^{M_y\times M_z}\) collects the one-way responses of all UPA elements. Since the system operates in a monostatic sensing setting with colocated transmit and receive arrays, the channel matrix is given by\cite{huaNearField3DLocalization2024,liMIMORadarColocatedAntennas2007}
\begin{align*}
\mathbf H(\mathbf r)=\widetilde{\mathbf a}(\mathbf r)\widetilde{\mathbf a}^{\mathrm T}(\mathbf r),
\end{align*}
where \(\widetilde{\mathbf a}(\mathbf r)=\operatorname{vec}(\mathbf A(\mathbf r))\in\C^{M\times 1}\) is the one-way steering vector, and the corresponding monostatic steering vector is given by 
\begin{align}
\mathbf a(\mathbf r)
=
\widetilde{\mathbf a}(\mathbf r)\otimes\widetilde{\mathbf a}(\mathbf r)
\in\C^{M_{\mathrm v}\times 1},
\label{eq:virtual_monostatic_steering_vector}
\end{align}
where \(M_{\mathrm v}=M^2\).

Each sensing observation comprises \(L\) snapshots. At the \(\ell\)-th snapshot, the BS receives
\begin{align*}
\mathbf y_\ell
=
h_\ell \mathbf a(\mathbf r)+\mathbf n_\ell,
\qquad \ell=1,\ldots,L,
\end{align*}
where \(h_\ell\sim\mathcal{CN}(0,\rho^2)\) denotes the effective random scattering coefficient, and \(\mathbf n_\ell\sim\mathcal{CN}(\mathbf 0,\sigma^2\mathbf I_{M_{\mathrm v}})\) denotes the receiver noise. The coefficients \(\{h_\ell\}_{\ell=1}^{L}\) and noise vectors \(\{\mathbf n_\ell\}_{\ell=1}^{L}\) are assumed mutually independent across snapshots. Therefore, for the selected physical state \(\mathbf r\), each sensing snapshot follows
\begin{align*}
\mathbf y_\ell \mid \mathbf r
\sim
\mathcal{CN}\big(\mathbf 0,\mathbf R(\mathbf r)\big),
\end{align*}
where the covariance matrix  is given by
\begin{align}
\mathbf R(\mathbf r)
=
\rho^2\mathbf a(\mathbf r)\mathbf a^{\mathrm H}(\mathbf r)
+\sigma^2\mathbf I_{M_{\mathrm v}}
=
\sigma^2\!\left(
\mathbf I_{M_{\mathrm v}}+\gamma_0\mathbf a(\mathbf r)\mathbf a^{\mathrm H}(\mathbf r)
\right),
\label{eq:state_covariance}
\end{align}
\textcolor{black}{and \(\gamma_0\triangleq\rho^2/\sigma^2\) is the sensing SNR, defined as the ratio of the average echo power \(\rho^2\) to the per-element noise power \(\sigma^2\).} It can be seen that the selected physical state information is encoded into the covariance matrix through the monostatic steering vector \(\mathbf a(\mathbf r)\).

\subsection{Problem Formulation}

Stack the \(L\)  sensing snapshots as \(\mathbf Y=[\mathbf y_1,\mathbf y_2,\ldots,\mathbf y_L]\in\C^{M_{\mathrm v}\times L}\). Since the observations \(\{\mathbf y_\ell\mid\mathbf r\}\) are mutually independent across the sensing snapshots \(\ell=1,\ldots,L\), and each follows \(\mathcal{CN}(\mathbf 0,\mathbf R(\mathbf r))\), their joint distribution factorizes into a product of \(L\) identical terms as
\begin{align*}
p(\mathbf Y\mid W=j)
=
\prod_{\ell=1}^{L}
\frac{1}{\pi^{M_{\mathrm v}}\det(\mathbf R_j)}
\exp\!\left(
-\mathbf y_\ell^{\mathrm H}\mathbf R_j^{-1}\mathbf y_\ell
\right),
\end{align*}
where \(\mathbf R_j\triangleq \mathbf R(\mathbf r_j)\). Thus, each channel use follows the chain
 \begin{align*}
 W\rightarrow\mathbf r_W\rightarrow\mathbf a(\mathbf r_W)\rightarrow\mathbf R(\mathbf r_W)\rightarrow\mathbf Y.
\end{align*}
Unlike conventional communication channels whose input symbols are waveforms generated in signal space, the input symbols of the embodied channel are physical states carved out of the finite region \(\mathcal A\), and the admissible alphabet is jointly constrained by what the agent can realize and what the BS can distinguish.

For the given embodied codebook \(\mathcal C\), the BS applies a decoder \(g:\C^{M_{\mathrm v}\times L}\rightarrow\{1,2,\ldots,J\}\) to produce the estimate \(\widehat W=g(\mathbf Y)\). The maximum error probability of the codebook under the optimal decoder is given by
\begin{align*}
P_{\max}^{\star}(\mathcal C)
=
\inf_g\ \max_{j\in\{1,\ldots,J\}}
\Pr\big(g(\mathbf Y)\neq j\mid W=j\big),
\end{align*}
where the infimum is taken over all measurable decoders. For a fixed number of sensing snapshots \(L\) and a target error probability \(\epsilon\in(0,1)\), define the largest \(\epsilon\)-reliable embodied alphabet size as
\begin{subequations}
\label{eq:max_reliable_alphabet_size}
\begin{align}
J_\epsilon^{\star}(L)
&=
\underset{\mathcal C\subseteq\mathcal A}{\operatorname{max}}
\quad
|\mathcal C|,
\label{eq:max_reliable_alphabet_size_objective}\\
\text{s.t.}
&\quad
P_{\max}^{\star}(\mathcal C)\le \epsilon.
\label{eq:max_reliable_alphabet_size_constraint}
\end{align}
\end{subequations}
The corresponding finite-snapshot \(\epsilon\)-capacity of the near-field embodied channel is
\begin{align}
C_\epsilon(L)
=
\frac{1}{LT_p}\log_2 J_\epsilon^{\star}(L),
\label{eq:epsilon_capacity}
\end{align}
where \(T_p\) denotes the duration of one probing snapshot. The central question of this paper is therefore how many three-dimensional physical states can be carved out of \(\mathcal A\) and reliably distinguished from \(L\) near-field sensing snapshots.

\section{Near-Field Reliability Metric}
\label{sec:geometry}

Before designing embodied codebooks, we first need to quantify how distinguishable two physical states are from the sensing observations. This section establishes the foundational reliability geometry: a pairwise sensing distance and the anchored 3D confusability 
 region it induces for a given error-probability.

\subsection{Pairwise Bhattacharyya Distance}

We begin with the binary discrimination problem between two candidate positions \(\mathbf r_i,\mathbf r_j\in\mathcal A\). The objective is not to estimate an arbitrary unknown state~\cite{qiaoSensingUsersActivityChannel2024,liu2025sensing,wangCooperativeISAC2026}, but to quantify how reliably two intentionally selected embodied symbols can be distinguished from sensing observations.

For one sensing snapshot, the BS distinguishes
\begin{align}
\mathcal H_i:\mathbf y_\ell\sim\mathcal{CN}(\mathbf 0,\mathbf R_i),
\qquad
\mathcal H_j:\mathbf y_\ell\sim\mathcal{CN}(\mathbf 0,\mathbf R_j).
\label{eq:binary_hypotheses}
\end{align}
To quantify how reliably the two hypotheses in \eqref{eq:binary_hypotheses} can be separated, we measure the statistical overlap between their sensing distributions through the Bhattacharyya coefficient~\cite{kailathDivergenceBhattacharyya2067}~\cite{fukunaga2013introduction}, which is defined as
\begin{align}
\operatorname{BC}(\mathbf r_i,\mathbf r_j)
\triangleq
\int_{\C^{M_{\mathrm v}}}
\sqrt{p(\mathbf y\mid\mathbf r_i)\,p(\mathbf y\mid\mathbf r_j)}
\,d\mathbf y
\in(0,1].
\label{eq:bhattacharyya_coefficient}
\end{align}
It equals one if and only if the two distributions coincide and decreases as they become more separable. Based on the above observation, the following lemma establishes a bound on the pairwise error probability over \(L\) independent sensing snapshots.

\begin{lemma}
\label{lem:bhattacharyya_bound}
Under ML detection between the two hypotheses in \eqref{eq:binary_hypotheses}, the pairwise error probability $P_{i\rightarrow j}$ over \(L\) independent sensing snapshots satisfies
\begin{align}
P_{i\rightarrow j}
\le
\operatorname{BC}(\mathbf r_i,\mathbf r_j)^{L}.
\label{eq:bhattacharyya_bound}
\end{align}
\end{lemma}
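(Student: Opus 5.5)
The plan is the standard Bhattacharyya argument applied to the $L$-snapshot likelihoods. Write $p_i(\mathbf Y)\triangleq p(\mathbf Y\mid\mathbf r_i)=\prod_{\ell=1}^{L}p(\mathbf y_\ell\mid\mathbf r_i)$, and similarly $p_j(\mathbf Y)$, using the product form established in the problem formulation. The ML detector between $\mathcal H_i$ and $\mathcal H_j$ decides $j$ on the region $\mathcal D_j=\{\mathbf Y:\ p_j(\mathbf Y)\ge p_i(\mathbf Y)\}$. The pairwise error probability is therefore
\begin{align*}
P_{i\rightarrow j}=\int_{\mathcal D_j}p_i(\mathbf Y)\,d\mathbf Y .
\end{align*}

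\textbf{Step 1 (pointwise bound).} On $\mathcal D_j$ we have $p_j(\mathbf Y)/p_i(\mathbf Y)\ge 1$. Hence $p_i(\mathbf Y)\le p_i(\mathbf Y)\sqrt{p_j(\mathbf Y)/p_i(\mathbf Y)}=\sqrt{p_i(\mathbf Y)p_j(\mathbf Y)}$. The Gaussian densities are strictly positive, so there is no division-by-zero issue. Integrating over $\mathcal D_j$ and then enlarging the domain to all of $\C^{M_{\mathrm v}\times L}$, using nonnegativity of the integrand, gives
\begin{align*}
P_{i\rightarrow j}\le\int_{\C^{M_{\mathrm v}\times L}}\sqrt{p_i(\mathbf Y)\,p_j(\mathbf Y)}\,d\mathbf Y .
\end{align*}

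\textbf{Step 2 (tensorization).} Substituting the product form, the square root factorizes across snapshots. By Fubini/Tonelli, which applies because the integrand is nonnegative, the $L$-fold integral splits into $L$ identical single-snapshot integrals. Each of these equals $\operatorname{BC}(\mathbf r_i,\mathbf r_j)$ as defined in \eqref{eq:bhattacharyya_coefficient}. This yields $\operatorname{BC}(\mathbf r_i,\mathbf r_j)^L$, which is \eqref{eq:bhattacharyya_bound}.

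There is no real obstacle in this argument. The only point needing care is the convention for ties in the ML rule. Assigning ties to either hypothesis keeps the inequality $p_i\le\sqrt{p_ip_j}$ valid on the decision region for $j$, and the tie set has measure zero anyway when $\mathbf R_i\neq\mathbf R_j$. I would also remark that the same argument bounds $P_{j\rightarrow i}$ symmetrically, and that the sum of the two error probabilities under equal priors is bounded likewise. This is the form needed later for the union bound over a codebook.
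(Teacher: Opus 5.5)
Your proposal is correct and matches the paper's proof essentially step for step. Both bound $p(\mathbf Y\mid\mathbf r_i)\le\sqrt{p(\mathbf Y\mid\mathbf r_i)\,p(\mathbf Y\mid\mathbf r_j)}$ on the ML error region, enlarge the integral to the whole space, and factorize it over the $L$ i.i.d. snapshots to get $\operatorname{BC}(\mathbf r_i,\mathbf r_j)^L$; your remarks on ties and Tonelli are harmless refinements the paper leaves implicit.
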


\begin{proof}
Please see Appendix~\ref{app:bhattacharyya_bound}.
\end{proof}

The bound \eqref{eq:bhattacharyya_bound} demonstrates that each snapshot shrinks the error bound by the Bhattacharyya coefficient. Accordingly, the Bhattacharyya distance is defined as~\cite{kailathDivergenceBhattacharyya2067}
\begin{align}
B(\mathbf r_i,\mathbf r_j)
\triangleq
-\log\operatorname{BC}(\mathbf r_i,\mathbf r_j)\ge 0,
\label{eq:bhattacharyya_distance_def}
\end{align}
so that \eqref{eq:bhattacharyya_bound} can be rewritten as
\begin{align}
P_{i\rightarrow j}
\le
\exp\!\big[-L B(\mathbf r_i,\mathbf r_j)\big].
\label{eq:pairwise_error_bound}
\end{align}
The distance \(B(\mathbf r_i,\mathbf r_j)\) equals zero if and only if the two states induce identical sensing distributions, and grows as the states become easier to distinguish. In this sense, it compares two physical states through the sensing distributions they induce, and serves as the fundamental reliability exponent governing pairwise discrimination. When the two states follow the Gaussian distribution in \eqref{eq:binary_hypotheses}, the following lemma shows that \(B(\mathbf r_i,\mathbf r_j)\) can be derived in closed form.

\begin{lemma}
\label{lem:gaussian_bhattacharyya} When two received echoes are complex Gaussian distributions as in~\eqref{eq:binary_hypotheses}, the corresponding Bhattacharyya distance is given by
\begin{align}
B(\mathbf r_i,\mathbf r_j)
=
\log
\frac{
\det\!\left((\mathbf R_i+\mathbf R_j)/2\right)
}{
\sqrt{\det(\mathbf R_i)\det(\mathbf R_j)}
}.
\label{eq:bhattacharyya_general}
\end{align}

\end{lemma}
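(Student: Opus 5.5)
We compute the Bhattacharyya coefficient in \eqref{eq:bhattacharyya_coefficient} directly for the two zero-mean circularly-symmetric complex Gaussian densities and then take $-\log$. Write $p(\mathbf y\mid\mathbf r_k)=\pi^{-M_{\mathrm v}}\det(\mathbf R_k)^{-1}\exp(-\mathbf y^{\mathrm H}\mathbf R_k^{-1}\mathbf y)$ for $k\in\{i,j\}$. Both $\mathbf R_i$ and $\mathbf R_j$ are Hermitian positive definite by \eqref{eq:state_covariance}, since each is $\sigma^2$ times identity plus a PSD rank-one term. Multiplying the two densities and taking the square root gives
\begin{align*}
\sqrt{p(\mathbf y\mid\mathbf r_i)p(\mathbf y\mid\mathbf r_j)}
=\frac{1}{\pi^{M_{\mathrm v}}\sqrt{\det(\mathbf R_i)\det(\mathbf R_j)}}
\exp\!\left(-\mathbf y^{\mathrm H}\mathbf S^{-1}\mathbf y\right),
\end{align*}
where $\mathbf S^{-1}\triangleq\tfrac12(\mathbf R_i^{-1}+\mathbf R_j^{-1})$. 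This matrix is Hermitian positive definite, so $\mathbf S$ is a valid covariance matrix.

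The key step is the standard complex Gaussian integral $\int_{\C^{M_{\mathrm v}}}\exp(-\mathbf y^{\mathrm H}\mathbf S^{-1}\mathbf y)\,d\mathbf y=\pi^{M_{\mathrm v}}\det(\mathbf S)$. It follows from normalization of the $\mathcal{CN}(\mathbf 0,\mathbf S)$ density, or equivalently by diagonalizing $\mathbf S$ with a unitary change of variables, which has unit Jacobian. Substituting this yields
\begin{align*}
\operatorname{BC}(\mathbf r_i,\mathbf r_j)=\frac{\det(\mathbf S)}{\sqrt{\det(\mathbf R_i)\det(\mathbf R_j)}}.
\end{align*}

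It remains to rewrite $\det(\mathbf S)$ in terms of the arithmetic mean $(\mathbf R_i+\mathbf R_j)/2$. We use the identity
\begin{align*}
\mathbf R_i^{-1}+\mathbf R_j^{-1}=\mathbf R_i^{-1}(\mathbf R_i+\mathbf R_j)\mathbf R_j^{-1},
\end{align*}
which holds because $\mathbf R_i^{-1}(\mathbf R_i+\mathbf R_j)\mathbf R_j^{-1}=\mathbf R_j^{-1}+\mathbf R_i^{-1}$. Hence $\det(\mathbf S^{-1})=\det\!\big((\mathbf R_i+\mathbf R_j)/2\big)/\big(\det\mathbf R_i\det\mathbf R_j\big)$, and therefore
\begin{align*}
\operatorname{BC}(\mathbf r_i,\mathbf r_j)=\frac{\det(\mathbf R_i)\det(\mathbf R_j)}{\det\!\big((\mathbf R_i+\mathbf R_j)/2\big)\sqrt{\det(\mathbf R_i)\det(\mathbf R_j)}}=\frac{\sqrt{\det(\mathbf R_i)\det(\mathbf R_j)}}{\det\!\big((\mathbf R_i+\mathbf R_j)/2\big)}.
\end{align*}
Applying the definition \eqref{eq:bhattacharyya_distance_def}, $B=-\log\operatorname{BC}$, gives exactly \eqref{eq:bhattacharyya_general}.

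The main point requiring care is the complex-Gaussian convention. The complex integral gives $\det(\mathbf S)$ to the first power, not its square root as in the real case, so the final expression carries no factor $\tfrac12$ in front of the logarithm, unlike the familiar real-valued Bhattacharyya formula. One should also check that every determinant is real and positive, which holds because all the matrices involved are Hermitian positive definite, so that the logarithm and square root are well defined.
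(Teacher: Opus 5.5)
Your proposal is correct and matches the paper's proof essentially step for step: your $\mathbf S^{-1}$ is the paper's $\mathbf A_{ij}=(\mathbf R_i^{-1}+\mathbf R_j^{-1})/2$. Both arguments evaluate the complex Gaussian integral to get a determinant to the first power, then use the factorization $\mathbf R_i^{-1}+\mathbf R_j^{-1}=\mathbf R_i^{-1}(\mathbf R_i+\mathbf R_j)\mathbf R_j^{-1}$ to reach $\operatorname{BC}=\sqrt{\det(\mathbf R_i)\det(\mathbf R_j)}/\det\!\big((\mathbf R_i+\mathbf R_j)/2\big)$ before taking $-\log$.
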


\begin{proof}
Please see Appendix~\ref{app:gaussian_bhattacharyya}.
\end{proof}

Lemma~\ref{lem:gaussian_bhattacharyya} holds for arbitrary Gaussian covariance matrices. For the considered near-field sensing model with covariance matrix~\eqref{eq:state_covariance}, the Bhattacharyya distance can be further derived in a closed form, as stated in the following theorem.

\begin{theorem}
\label{thm:pairwise_reliability}
For any two positions \(\mathbf r_i,\mathbf r_j\in\mathcal A\), let \(\eta(\mathbf r_i,\mathbf r_j)= |\mathbf a^{\mathrm H}(\mathbf r_i)\mathbf a(\mathbf r_j)|^2\) and \(\kappa= \gamma_0^2/[4(1+\gamma_0)]\). The single-snapshot Bhattacharyya distance is
\begin{align}
B(\mathbf r_i,\mathbf r_j)
=
\log\!\left(
1+\kappa\big[1-\eta(\mathbf r_i,\mathbf r_j)\big]
\right),
\label{eq:pairwise_bhattacharyya}
\end{align}
and accordingly, the pairwise error bound \eqref{eq:pairwise_error_bound} over \(L\) snapshots specializes to
\begin{align}
P_{i\rightarrow j}
\le
\Big(
1+\kappa\big[1-\eta(\mathbf r_i,\mathbf r_j)\big]
\Big)^{-L}.
\label{eq:pairwise_error_bound_closed}
\end{align}
\end{theorem}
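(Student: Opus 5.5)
Apply Lemma~\ref{lem:gaussian_bhattacharyya} to the rank-one-plus-identity covariances in \eqref{eq:state_covariance} and evaluate both determinants in closed form. The first thing to check is the norm of the steering vector. Each one-way response \eqref{eq:near_field_array_response} has $M$ unit-modulus entries scaled by $1/\sqrt M$, so $\|\widetilde{\mathbf a}(\mathbf r)\|_2=1$. By \eqref{eq:virtual_monostatic_steering_vector} we then get $\|\mathbf a(\mathbf r)\|_2=\|\widetilde{\mathbf a}(\mathbf r)\|_2^2=1$. Write $\mathbf a_i=\mathbf a(\mathbf r_i)$, and note that the common factor $\sigma^2$ cancels between numerator and denominator of \eqref{eq:bhattacharyya_general}. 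It can therefore be dropped.

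\textbf{Denominator.} By the matrix determinant lemma, $\det(\mathbf I+\gamma_0\mathbf a_i\mathbf a_i^{\mathrm H})=1+\gamma_0\|\mathbf a_i\|^2=1+\gamma_0$. The same holds for $j$, so $\sqrt{\det\mathbf R_i\det\mathbf R_j}\propto 1+\gamma_0$.

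\textbf{Numerator.} Here $(\mathbf R_i+\mathbf R_j)/2\propto \mathbf I+\tfrac{\gamma_0}{2}(\mathbf a_i\mathbf a_i^{\mathrm H}+\mathbf a_j\mathbf a_j^{\mathrm H})=\mathbf I+\mathbf U\mathbf U^{\mathrm H}$ with $\mathbf U=\sqrt{\gamma_0/2}\,[\mathbf a_i,\mathbf a_j]$. Sylvester's identity gives $\det(\mathbf I_{M_{\mathrm v}}+\mathbf U\mathbf U^{\mathrm H})=\det(\mathbf I_2+\mathbf U^{\mathrm H}\mathbf U)$. Set $c=\mathbf a_i^{\mathrm H}\mathbf a_j$. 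The $2\times2$ matrix $\mathbf I_2+\mathbf U^{\mathrm H}\mathbf U$ has diagonal entries $1+\gamma_0/2$ and off-diagonal entries $\tfrac{\gamma_0}{2}c$ and $\tfrac{\gamma_0}{2}c^*$. Its determinant is
\[
\Big(1+\tfrac{\gamma_0}{2}\Big)^2-\tfrac{\gamma_0^2}{4}|c|^2=(1+\gamma_0)+\tfrac{\gamma_0^2}{4}\big(1-\eta\big).
\]

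\textbf{Ratio.} Dividing by $1+\gamma_0$ gives $1+\kappa(1-\eta)$ with $\kappa=\gamma_0^2/[4(1+\gamma_0)]$. Taking the logarithm yields \eqref{eq:pairwise_bhattacharyya}. Since $\eta\le\|\mathbf a_i\|^2\|\mathbf a_j\|^2=1$ by Cauchy--Schwarz, this is consistent with $B\ge0$.

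\textbf{Error bound.} Substituting $\operatorname{BC}=e^{-B}=(1+\kappa(1-\eta))^{-1}$ into Lemma~\ref{lem:bhattacharyya_bound}, equivalently into \eqref{eq:pairwise_error_bound}, gives \eqref{eq:pairwise_error_bound_closed} directly.

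The only step needing care is the numerator determinant. The plan is to reduce it to the $2\times 2$ Gram matrix via Sylvester's identity, rather than diagonalizing $\mathbf a_i\mathbf a_i^{\mathrm H}+\mathbf a_j\mathbf a_j^{\mathrm H}$ directly. Beyond that, the proof only relies on the unit-norm normalization of $\mathbf a(\mathbf r)$ so that the diagonal entries equal $1+\gamma_0/2$.
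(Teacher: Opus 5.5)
Your proposal is correct and follows the paper's proof essentially step by step: the determinant lemma gives $\det(\mathbf R_i)=\sigma^{2M_{\mathrm v}}(1+\gamma_0)$, Sylvester's identity with $[\mathbf a(\mathbf r_i),\mathbf a(\mathbf r_j)]$ reduces the numerator to the $2\times 2$ determinant $(1+\gamma_0/2)^2-\gamma_0^2\eta/4$, and the ratio together with Lemma~\ref{lem:bhattacharyya_bound} yields both claims. Your explicit check that $\|\mathbf a(\mathbf r)\|_2=\|\widetilde{\mathbf a}(\mathbf r)\|_2^2=1$ follows from the Kronecker structure is a small but useful addition, since the paper only asserts this normalization.
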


\begin{proof}
Please see Appendix~\ref{app:pairwise_reliability}.
\end{proof}

Theorem~\ref{thm:pairwise_reliability} shows that the Bhattacharyya distance depends on the two physical states only through the correlation of their steering vectors \(\eta(\mathbf r_i,\mathbf r_j)\). Specifically, the logic chain
\begin{align*}
\eta(\mathbf r_i,\mathbf r_j)\downarrow
&\Rightarrow
B(\mathbf r_i,\mathbf r_j)\uparrow
&\Rightarrow
\exp\!\big[-LB(\mathbf r_i,\mathbf r_j)\big]\downarrow
\Rightarrow
P_{i\to j}\downarrow
\end{align*}
holds, so two positions are reliably distinguished if and only if their monostatic steering vectors are weakly correlated.

\begin{remark}[Near-field breaks displacement-field translation invariance of the far-field]
\label{rem:near_field_two_point}
In the far field, the wavefront  is  planar, so the steering vector depends on the scatterer position only through its arrival direction, whose azimuth and elevation angles are approximately denoted by \(y/D\) and \(z/D\), respectively. This has two implications: First, a displacement along the \(x\)-axis does not change the arrival direction, and hence produces no change in the steering vector that the array could resolve~\cite{liuNearFieldCommunicationsTutorial2023,bjornsonPowerScalingLaws2020}. Therefore, the usable codeword space degenerates from the 3D region \(\mathcal A\) to its 2D transverse cross-section. Second, within this 2D space, as proved in \cite[Eq. (29)]{shaoEmbodiedCommunication2026} the steering-vector correlation \(\eta\) of two positions depends only on their transverse displacement \((\Delta_y,\Delta_z)=(y_i-y_j,\,z_i-z_j)\), so the pairwise reliability exponent collapses to a translation-invariant displacement field \(B_{\mathrm{FF}}(\Delta_y,\Delta_z)\), where all codeword pairs with the same transverse displacement are equally distinguishable, regardless of where they are located.

However, in the near field, both properties do not hold anymore. The wavefront is spherical, and the array elements in \eqref{eq:near_field_array_response} vary with  \(\mathbf r=(x,y,z)^{\mathrm T}\) of the scatterer. Consequently, a displacement along the \(x\)-axis now  becomes distinguishable, so all three coordinates of \(\mathcal A\) can carry information. Moreover, the curvature seen by the array also changes with the absolute position at which a codeword pair is placed, so two pairs with the same 3D displacement \(\mathbf r_i-\mathbf r_j\) can have different reliability exponents depending on where they are located in \(\mathcal A\). The relevant reliability object is therefore not a displacement field but the two-point function \(B(\mathbf r_i,\mathbf r_j)\) over the 3D region.
\end{remark}

\subsection{3D Confusability Region}

The pairwise distance \(B(\mathbf r_i,\mathbf r_j)\) is a two-point measure of distinguishability. To convert it into a spatial resolution representation centered at a single point, fix a reference position \(\mathbf r\in\mathcal A\) as the anchor and describe its neighborhood by its displacement \(\boldsymbol\delta\) from it, which can be denoted as \(\mathbf r+\boldsymbol\delta\). This yields the anchored distance \(B_{\mathbf r}(\boldsymbol\delta)\triangleq B(\mathbf r,\mathbf r+\boldsymbol\delta)\)\footnote{Note that, unlike the far-field displacement field \(B_{\mathrm{FF}}\) in Remark~\ref{rem:near_field_two_point}, the anchored distance still depends on the absolute anchor position through the subscript \(\mathbf r\), and hence the same displacement \(\boldsymbol\delta\) generally yields different distances at different anchors.}.
For a target pairwise error probability \(\epsilon_p\in(0,1)\) after \(L\) snapshots, letting the error bound in \eqref{eq:pairwise_error_bound} be less than \(\epsilon_p\) yields the Bhattacharyya distance threshold
\begin{align}
B_{\mathrm{req}}(\epsilon_p,L)
=
\frac{1}{L}\log\frac{1}{\epsilon_p}.
\label{eq:required_exponent}
\end{align}
Any offset \(\boldsymbol\delta\) with \(B_{\mathbf r}(\boldsymbol\delta)<B_{\mathrm{req}}(\epsilon_p,L)\) fails to meet this threshold and is therefore confusable with the anchor. The near-field confusability region can be written as
\begin{align}
\mathcal S_{\mathrm{conf}}(\mathbf r;\epsilon_p,L)
=
\left\{
\boldsymbol\delta\in\R^3:
\begin{array}{l}
\mathbf r+\boldsymbol\delta\in\mathcal A,\\
B_{\mathbf r}(\boldsymbol\delta)<B_{\mathrm{req}}(\epsilon_p,L)
\end{array}
\right\}.
\label{eq:anchored_confusability_region}
\end{align}
Any displacement outside \(\mathcal S_{\mathrm{conf}}(\mathbf r;\epsilon_p,L)\) is pairwise resolvable at anchor \(\mathbf r\).

\begin{remark}[Position dependent resolution and volumetric alphabets]
\label{rem:anchored_cell}
Since the anchored exponent \(B_{\mathbf r}(\boldsymbol\delta)\) depends on the absolute anchor position, the same displacement \(\boldsymbol\delta\) may be resolvable at one anchor and confusable at another. Near-field resolution is therefore described by a family of position dependent cells \(\mathcal S_{\mathrm{conf}}(\mathbf r;\epsilon_p,L)\). Moreover, since spherical wavefronts make all three coordinates distinguishable, each cell is bounded in all three spatial directions, which enables volumetric embodied alphabets.
\end{remark}

\section{3D Reliability Geometry}
\label{sec:local_metric}

The confusability cell \(\mathcal S_{\mathrm{conf}}(\mathbf r;\epsilon_p,L)\) defined in Section~\ref{sec:geometry} captures the exact near-field resolution geometry, but its dependence on the full two-point field \(B(\mathbf r,\mathbf r+\boldsymbol\delta)\), which makes it difficult to be employed as a codebook design rule. This section derives a tractable local approximation, yielding a position-dependent 3D ellipsoidal geometry that admits a closed-form solution. Moreover, as the agent controllable space moves far away from the array, this ellipsoid degenerates into the far-field  2D ellipse, confirming that the proposed geometry generalizes the far-field scenario.

\subsection{Near-Field 3D Ellipsoidal Geometry}

Once the anchor \(\mathbf r\) is fixed, a tractable local approximation is obtained by expanding \(B_{\mathbf r}(\boldsymbol\delta)\) to second order, as demonstrated in the following proposition.

\begin{proposition}
\label{prop:local_metric}
For sufficiently small \(\boldsymbol\delta\), the Bhattacharyya distance~\eqref{eq:pairwise_bhattacharyya} can be approximated as
\begin{align*}
B_{\mathbf r}(\boldsymbol\delta)
\approx
\boldsymbol\delta^{\mathrm T}
\mathbf G(\mathbf r)
\boldsymbol\delta,
\end{align*}
where 
\begin{align*}
\mathbf G(\mathbf r)
=
\kappa\,
\mathbf J_a^{\mathrm H}(\mathbf r)\mathbf\Pi^\perp_{\mathbf a(\mathbf r)}\mathbf J_a(\mathbf r),
\end{align*}
\(\kappa\) is the same as in Theorem~\ref{thm:pairwise_reliability},
\(\mathbf J_a(\mathbf r)=
\big[
\frac{\partial \mathbf a(\mathbf r)}{\partial x},
\frac{\partial \mathbf a(\mathbf r)}{\partial y},
\frac{\partial \mathbf a(\mathbf r)}{\partial z}
\big]
\in\C^{M_{\mathrm v}\times 3}\)
is the response Jacobian, and
\(\mathbf \Pi^\perp_{\mathbf a(\mathbf r)}=\mathbf I_{M_{\mathrm v}}-\mathbf a(\mathbf r)\mathbf a^{\mathrm H}(\mathbf r)\)
is the projector onto the orthogonal complement of \(\mathbf a(\mathbf r)\).
\end{proposition}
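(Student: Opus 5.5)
Starting from Theorem~\ref{thm:pairwise_reliability}, we have $B_{\mathbf r}(\boldsymbol\delta)=\log\big(1+\kappa[1-\eta(\mathbf r,\mathbf r+\boldsymbol\delta)]\big)$. The plan is to expand $1-\eta$ to second order in $\boldsymbol\delta$ and then linearize the logarithm. Note that $\|\mathbf a(\mathbf r)\|_2=1$: each entry of $\widetilde{\mathbf a}$ has modulus $1/\sqrt M$, so $\|\widetilde{\mathbf a}\|=1$, and the Kronecker product preserves the norm. Hence $\eta(\mathbf r,\mathbf r)=1$, and by Cauchy--Schwarz $1-\eta\ge 0$ with its minimum attained at $\boldsymbol\delta=\mathbf 0$. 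So $1-\eta$ has no constant or linear term, and its leading term is a quadratic form. Since $\log(1+u)=u+O(u^2)$ and $u=\kappa(1-\eta)=O(\|\boldsymbol\delta\|^2)$, we get $B_{\mathbf r}(\boldsymbol\delta)=\kappa(1-\eta)+O(\|\boldsymbol\delta\|^4)$. It therefore suffices to show $1-\eta=\boldsymbol\delta^{\mathrm T}\mathbf J_a^{\mathrm H}\mathbf\Pi^\perp_{\mathbf a}\mathbf J_a\boldsymbol\delta+O(\|\boldsymbol\delta\|^3)$, where the quadratic form is understood through its real part (equivalently, $\mathbf G$ is replaced by its real symmetric part).

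For the expansion, write $\mathbf a(\mathbf r+\boldsymbol\delta)=\mathbf a+\mathbf J_a\boldsymbol\delta+\tfrac12\mathbf h+O(\|\boldsymbol\delta\|^3)$, where $\mathbf h=\sum_{k,l}\delta_k\delta_l\,\partial_k\partial_l\mathbf a$. Define $c=\mathbf a^{\mathrm H}\mathbf a(\mathbf r+\boldsymbol\delta)=1+\mathbf a^{\mathrm H}\mathbf J_a\boldsymbol\delta+\tfrac12\mathbf a^{\mathrm H}\mathbf h+\dots$, so that $\eta=|c|^2$. Differentiating $\mathbf a^{\mathrm H}\mathbf a\equiv1$ once gives $\mathrm{Re}(\mathbf a^{\mathrm H}\partial_k\mathbf a)=0$; hence $\alpha\triangleq\mathbf a^{\mathrm H}\mathbf J_a\boldsymbol\delta$ is purely imaginary. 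Differentiating twice gives $\mathrm{Re}(\mathbf a^{\mathrm H}\partial_k\partial_l\mathbf a)=-\mathrm{Re}\big((\partial_k\mathbf a)^{\mathrm H}\partial_l\mathbf a\big)$, so $\mathrm{Re}(\mathbf a^{\mathrm H}\mathbf h)=-\|\mathbf J_a\boldsymbol\delta\|^2$. Then $|c|^2=1+2\mathrm{Re}(\alpha)+|\alpha|^2+\mathrm{Re}(\mathbf a^{\mathrm H}\mathbf h)+O(\|\boldsymbol\delta\|^3)=1-\|\mathbf J_a\boldsymbol\delta\|^2+|\mathbf a^{\mathrm H}\mathbf J_a\boldsymbol\delta|^2+O(\|\boldsymbol\delta\|^3)$. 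Therefore
\begin{align*}
1-\eta=\boldsymbol\delta^{\mathrm T}\mathbf J_a^{\mathrm H}\big(\mathbf I-\mathbf a\mathbf a^{\mathrm H}\big)\mathbf J_a\boldsymbol\delta+O(\|\boldsymbol\delta\|^3),
\end{align*}
which is exactly the projected form. Multiplying by $\kappa$ yields $\mathbf G(\mathbf r)$.

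The main obstacle is the bookkeeping of second-order terms. The cross term $2\mathrm{Re}(\alpha)$ must vanish, and the Hessian contribution must combine correctly with $|\alpha|^2$; both rely on differentiating the unit-norm identity, which holds for every $\mathbf r$ and not only at the anchor. A subsidiary point is that $\mathbf J_a^{\mathrm H}\mathbf\Pi^\perp\mathbf J_a$ is Hermitian but possibly complex. Because $\boldsymbol\delta$ is real, however, only its real part contributes to the quadratic form, so the stated $\mathbf G$ should be read as its real part, which is positive semidefinite. Smoothness of $\mathbf a$ (the $d_{m_y,m_z}$ are smooth away from the array) justifies the Taylor remainders.
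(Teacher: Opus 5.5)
Your proof is correct. It reaches the same projected Gram form as the paper, but the key step works differently.

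\textbf{The paper's route.} The paper first uses $\|\mathbf a(\mathbf r+\boldsymbol\delta)\|_2=1$ to rewrite $1-\eta$ exactly as $\|\mathbf\Pi^\perp_{\mathbf a(\mathbf r)}\mathbf a(\mathbf r+\boldsymbol\delta)\|_2^2$. Because $\mathbf\Pi^\perp_{\mathbf a(\mathbf r)}\mathbf a(\mathbf r)=\mathbf 0$, the vector inside the norm is already $O(\|\boldsymbol\delta\|)$. A first-order expansion of $\mathbf a$ therefore suffices: the second-order Taylor term reaches the squared norm only through a cross term of order $\|\boldsymbol\delta\|^3$. The quadratic form $\|\mathbf\Pi^\perp_{\mathbf a(\mathbf r)}\mathbf J_a(\mathbf r)\boldsymbol\delta\|_2^2$, and its positive semidefiniteness, then follow in one line.

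\textbf{Your route.} You expand the inner product $c=\mathbf a^{\mathrm H}(\mathbf r)\mathbf a(\mathbf r+\boldsymbol\delta)$ directly. Here the Hessian term does contribute at second order, and you cancel it by differentiating $\mathbf a^{\mathrm H}\mathbf a\equiv 1$ twice. This is the infinitesimal form of the same unit-norm fact the paper uses at the displaced point.

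\textbf{What each buys.} Your route needs more bookkeeping, but it gives explicit remainder orders: $O(\|\boldsymbol\delta\|^3)$ for $1-\eta$ and $O(\|\boldsymbol\delta\|^4)$ for linearizing the logarithm. The paper leaves these implicit in its ``$\approx$'' steps, and never justifies why a first-order substitution into a second-order quantity is safe. The paper's route shows more transparently why only the Jacobian survives.

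\textbf{On reading $\mathbf G$ as its real part.} This caveat is harmless but unnecessary. For real $\boldsymbol\delta$ and Hermitian $\mathbf J_a^{\mathrm H}\mathbf\Pi^\perp_{\mathbf a(\mathbf r)}\mathbf J_a$, the quadratic form is automatically real, because the imaginary part of a Hermitian matrix is antisymmetric. For this array model, Theorem~\ref{thm:upa_usw_local_metric} also shows that $\mathbf G(\mathbf r)$ is itself a real matrix.
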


\begin{proof}
Since \(\eta(\mathbf r,\mathbf r+\boldsymbol\delta)=|\mathbf a^{\mathrm H}(\mathbf r)\mathbf a(\mathbf r+\boldsymbol\delta)|^2\), we have
\begin{align}
1-\eta(\mathbf r,\mathbf r+\boldsymbol\delta)
&\overset{(a)}{=}
\|\mathbf a(\mathbf r+\boldsymbol\delta)\|_2^2
-
|\mathbf a^{\mathrm H}(\mathbf r)\mathbf a(\mathbf r+\boldsymbol\delta)|^2 \notag\\
&=
\mathbf a^{\mathrm H}(\mathbf r+\boldsymbol\delta)\,
\mathbf \Pi^\perp_{\mathbf a(\mathbf r)}\,
\mathbf a(\mathbf r+\boldsymbol\delta) \notag\\
&\overset{(b)}{=}
\|\mathbf \Pi^\perp_{\mathbf a(\mathbf r)}\mathbf a(\mathbf r+\boldsymbol\delta)\|_2^2,
\label{eq:projection_correlation_identity}
\end{align}
where \(\mathbf \Pi^\perp_{\mathbf a(\mathbf r)}=\mathbf I_{M_{\mathrm v}}-\mathbf a(\mathbf r)\mathbf a^{\mathrm H}(\mathbf r)\) is the projector onto the orthogonal complement of \(\mathbf a(\mathbf r)\), \((a)\) is derived from \(\|\mathbf a(\mathbf r+\boldsymbol\delta)\|_2=1\), and \((b)\) follows from the  \((\mathbf \Pi^\perp_{\mathbf a(\mathbf r)})^{\mathrm H}\mathbf \Pi^\perp_{\mathbf a(\mathbf r)}=\mathbf \Pi^\perp_{\mathbf a(\mathbf r)}\). For a small displacement \(\boldsymbol\delta\), the first-order expansion of the response is
\begin{align}
\mathbf a(\mathbf r+\boldsymbol\delta)
\approx
\mathbf a(\mathbf r)
+
\mathbf J_a(\mathbf r)\boldsymbol\delta,
\label{eq:array_response_local_expansion}
\end{align}
where \(\mathbf J_a(\mathbf r)=
\big[
\frac{\partial \mathbf a(\mathbf r)}{\partial x},
\frac{\partial \mathbf a(\mathbf r)}{\partial y},
\frac{\partial \mathbf a(\mathbf r)}{\partial z}
\big]
\in\C^{M_{\mathrm v}\times 3}\)
is the response Jacobian.
Substituting \eqref{eq:array_response_local_expansion}
into \eqref{eq:projection_correlation_identity} and noting
\(\mathbf\Pi^\perp_{\mathbf a(\mathbf r)}\mathbf a(\mathbf r)=\mathbf 0\) yields
\begin{align}
1-\eta(\mathbf r,\mathbf r+\boldsymbol\delta)
&\approx
\|\mathbf\Pi^\perp_{\mathbf a(\mathbf r)}\mathbf J_a(\mathbf r)\boldsymbol\delta\|_2^2
\nonumber\\
&\overset{(c)}{=}
\boldsymbol\delta^{\mathrm T}
\mathbf J_a^{\mathrm H}(\mathbf r)\mathbf\Pi^\perp_{\mathbf a(\mathbf r)}\mathbf J_a(\mathbf r)
\boldsymbol\delta,
\label{eq:local_metric_quadratic_form}
\end{align}
where \((c)\) uses
\((\mathbf\Pi^\perp_{\mathbf a(\mathbf r)})^{\mathrm H}\mathbf\Pi^\perp_{\mathbf a(\mathbf r)}=\mathbf\Pi^\perp_{\mathbf a(\mathbf r)}\).
Finally, substituting \eqref{eq:local_metric_quadratic_form} into \eqref{eq:pairwise_bhattacharyya} and applying the first-order approximation \(\log(1+x)\approx x\), we obtain
\begin{align*}
B_{\mathbf r}(\boldsymbol\delta)\approx\boldsymbol\delta^{\mathrm T}\mathbf G(\mathbf r)\boldsymbol\delta,
\end{align*}
with \(\mathbf G(\mathbf r)=\kappa\mathbf J_a^{\mathrm H}(\mathbf r)\mathbf\Pi^\perp_{\mathbf a(\mathbf r)}\mathbf J_a(\mathbf r)\).
\end{proof}

Proposition~\ref{prop:local_metric} shows that the anchored exponent demonstrates a quadratic form through the response Jacobian and the orthogonal projector. The following theorem shows that \(\mathbf G(\mathbf r)\) under the near-field array response can be explicitly derived in closed form.

\begin{theorem} \label{thm:upa_usw_local_metric}
For the near-field array response, the Bhattacharyya distance in Proposition~\ref{prop:local_metric} can be expressed as
\begin{align}
B_{\mathbf r}(\boldsymbol\delta)
\approx
\boldsymbol\delta^{\mathrm T}
\underbrace{
\frac{2\kappa}{M}
\sum_{m=1}^{M}
\mathbf c_m(\mathbf r)
\mathbf c_m^{\mathrm T}(\mathbf r)
}_{\triangleq\,\mathbf G(\mathbf r)}
\boldsymbol\delta,
\label{eq:local_metric_covariance_form}
\end{align}
where \(\mathbf c_m(\mathbf r)\) is the centered phase gradient of the \(m\)-th element, i.e.,
\begin{align*}
\mathbf c_m(\mathbf r)
=
\mathbf u_m(\mathbf r)
-
\frac{1}{M}
\sum_{q=1}^{M}
\mathbf u_q(\mathbf r),
\quad m=1,\ldots,M,
\end{align*}
with \(\mathbf u_m(\mathbf r)=\frac{2\pi}{\lambda}\nabla_{\mathbf r}[d_m(\mathbf r)-d_0(\mathbf r)]\) denoting the per-element phase gradient.
\end{theorem}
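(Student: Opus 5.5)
Starting from Proposition~\ref{prop:local_metric}, it suffices to evaluate $\mathbf J_a^{\mathrm H}\mathbf\Pi^\perp_{\mathbf a}\mathbf J_a=\mathbf J_a^{\mathrm H}\mathbf J_a-(\mathbf J_a^{\mathrm H}\mathbf a)(\mathbf a^{\mathrm H}\mathbf J_a)$ in closed form. I would first write the one-way steering vector entrywise as $[\widetilde{\mathbf a}(\mathbf r)]_m=M^{-1/2}e^{-j\phi_m(\mathbf r)}$ with $\phi_m=\frac{2\pi}{\lambda}(d_m-d_0)$, so that $\nabla_{\mathbf r}\phi_m=\mathbf u_m$. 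Because the amplitude $1/\sqrt M$ does not depend on $\mathbf r$, the one-way Jacobian is simply $\partial[\widetilde{\mathbf a}]_m/\partial\mathbf r=-j[\widetilde{\mathbf a}]_m\mathbf u_m^{\mathrm T}$.

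Next I would pass to the monostatic vector $\mathbf a=\widetilde{\mathbf a}\otimes\widetilde{\mathbf a}$. Its entries are indexed by pairs $(m,n)$, with $[\mathbf a]_{(m,n)}=M^{-1}e^{-j(\phi_m+\phi_n)}$, and hence
\begin{align*}
\nabla_{\mathbf r}[\mathbf a]_{(m,n)}=-j[\mathbf a]_{(m,n)}(\mathbf u_m+\mathbf u_n).
\end{align*}
Every entry has modulus $1/M$, so $\|\mathbf a\|=1$, as the proof of Proposition~\ref{prop:local_metric} already requires. This gives
\begin{align*}
\mathbf J_a^{\mathrm H}\mathbf J_a=\frac{1}{M^2}\sum_{m,n}(\mathbf u_m+\mathbf u_n)(\mathbf u_m+\mathbf u_n)^{\mathrm T},
\qquad
\mathbf a^{\mathrm H}\mathbf J_a=-\frac{j}{M^2}\sum_{m,n}(\mathbf u_m+\mathbf u_n)^{\mathrm T}=-2j\,\bar{\mathbf u}^{\mathrm T},
\end{align*}
where $\bar{\mathbf u}=\frac1M\sum_q\mathbf u_q$. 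The rank-one correction is therefore $4\bar{\mathbf u}\bar{\mathbf u}^{\mathrm T}$; the phases cancel, so the result is real.

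The remaining step is pure algebra, and I expect it to be the only real bookkeeping. Expanding the double sum gives
\begin{align*}
\frac{1}{M^2}\sum_{m,n}\big(\mathbf u_m\mathbf u_m^{\mathrm T}+\mathbf u_n\mathbf u_n^{\mathrm T}+\mathbf u_m\mathbf u_n^{\mathrm T}+\mathbf u_n\mathbf u_m^{\mathrm T}\big)=\frac{2}{M}\sum_m\mathbf u_m\mathbf u_m^{\mathrm T}+2\bar{\mathbf u}\bar{\mathbf u}^{\mathrm T}.
\end{align*}
Subtracting $4\bar{\mathbf u}\bar{\mathbf u}^{\mathrm T}$ leaves
\begin{align*}
\frac{2}{M}\sum_m\mathbf u_m\mathbf u_m^{\mathrm T}-2\bar{\mathbf u}\bar{\mathbf u}^{\mathrm T}=\frac{2}{M}\sum_m\mathbf c_m\mathbf c_m^{\mathrm T},
\end{align*}
which follows from the standard centered-covariance identity $\frac1M\sum_m\mathbf c_m\mathbf c_m^{\mathrm T}=\frac1M\sum_m\mathbf u_m\mathbf u_m^{\mathrm T}-\bar{\mathbf u}\bar{\mathbf u}^{\mathrm T}$. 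Multiplying by $\kappa$ then yields \eqref{eq:local_metric_covariance_form}.

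The points needing care are, first, correctly handling the vectorization order in the Kronecker product; this is harmless, since only sums over all pairs $(m,n)$ appear. Second, one should note that $\mathbf J_a^{\mathrm H}\mathbf\Pi^\perp\mathbf J_a$ may a priori be complex, so that the quadratic form $\boldsymbol\delta^{\mathrm T}(\cdot)\boldsymbol\delta$ picks out only its real symmetric part. The computation above shows it is real, so no information is lost.
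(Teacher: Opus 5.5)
Your proposal is correct and follows essentially the paper's route: in both arguments the work is done by the Kronecker-indexed Jacobian rows $-j[\mathbf a]_{(m,n)}(\mathbf u_m+\mathbf u_n)^{\mathrm T}$ together with the constant modulus $1/M$ of the entries of $\mathbf a(\mathbf r)$. The only difference is the order of the bookkeeping, and the two are equivalent: the paper first shows that $\mathbf\Pi^\perp_{\mathbf a(\mathbf r)}\mathbf J_a(\mathbf r)$ has rows $-j[\mathbf a]_{(m,n)}(\mathbf c_m+\mathbf c_n)^{\mathrm T}$ and then forms the Gram matrix using $\sum_m\mathbf c_m=\mathbf 0$, whereas you subtract the rank-one term $4\bar{\mathbf u}\bar{\mathbf u}^{\mathrm T}$ from $\mathbf J_a^{\mathrm H}\mathbf J_a$ and then apply the centered-covariance identity.
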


\begin{proof}
Based on Proposition~\ref{prop:local_metric}, it suffices to compute
\(\mathbf G(\mathbf r)=\kappa\,\mathbf J_a^{\mathrm H}(\mathbf r)\mathbf\Pi^\perp_{\mathbf a(\mathbf r)}\mathbf J_a(\mathbf r)\). We first derive \(\mathbf J_a(\mathbf r)\) and then compute \(\mathbf G(\mathbf r)\). By the Kronecker structure \eqref{eq:virtual_monostatic_steering_vector}, the \(i_{m,n}\)-th entry (i.e., \(i_{m,n}=(m-1)M+n\)) of \(\mathbf a(\mathbf r)\) is given by
\begin{align}
a_{i_{m,n}}(\mathbf r)
=
[\widetilde{\mathbf a}(\mathbf r)]_m
[\widetilde{\mathbf a}(\mathbf r)]_n
=
\frac{1}{M}\exp[-j(\phi_m(\mathbf r)+\phi_n(\mathbf r))],
\label{eq:app_virtual_scalar_response}
\end{align}
where \(\phi_m(\mathbf r)=\frac{2\pi}{\lambda}[d_m(\mathbf r)-d_0(\mathbf r)]\). Differentiating \eqref{eq:app_virtual_scalar_response} with respect to \(\mathbf r\), the \(i_{m,n}\)-th row of \(\mathbf J_a(\mathbf r)\) is given by
\begin{align}
\left[\mathbf J_a(\mathbf r)\right]_{i_{m,n},:}
=
-j a_{i_{m,n}}(\mathbf r)
\big(\mathbf u_m(\mathbf r)+\mathbf u_n(\mathbf r)\big)^{\mathrm T},
\label{eq:app_jacobian_row}
\end{align}
where \(\mathbf u_m(\mathbf r)=\frac{2\pi}{\lambda}\nabla_{\mathbf r}[d_m(\mathbf r)-d_0(\mathbf r)]\) is the per-element phase gradient.

Then, we compute \(\mathbf G(\mathbf r)\). Since \((\mathbf\Pi^\perp_{\mathbf a(\mathbf r)})^{\mathrm H}\mathbf\Pi^\perp_{\mathbf a(\mathbf r)}=\mathbf\Pi^\perp_{\mathbf a(\mathbf r)}\), we have
\begin{align}
\mathbf G(\mathbf r)
=
\kappa
\big(\mathbf\Pi^\perp_{\mathbf a(\mathbf r)}\mathbf J_a(\mathbf r)\big)^{\mathrm H}
\big(\mathbf\Pi^\perp_{\mathbf a(\mathbf r)}\mathbf J_a(\mathbf r)\big),
\label{eq:app_metric_gram_form}
\end{align}
so it remains to evaluate the term \(\mathbf\Pi^\perp_{\mathbf a(\mathbf r)}\mathbf J_a(\mathbf r)\), which is given by
\begin{align}
\mathbf\Pi^\perp_{\mathbf a(\mathbf r)}\mathbf J_a(\mathbf r)
&=
\mathbf J_a(\mathbf r)
-
\mathbf a(\mathbf r)\,
\mathbf a^{\mathrm H}(\mathbf r)\mathbf J_a(\mathbf r)
\nonumber\\
&\overset{(a)}{=}
-j\mathbf a(\mathbf r)
\Big[
\big(\mathbf u_m(\mathbf r)+\mathbf u_n(\mathbf r)\big)^{\mathrm T}
\nonumber\\
&\qquad
-\frac{1}{M^2}
\sum_{p=1}^{M}\sum_{q=1}^{M}
\big(\mathbf u_p(\mathbf r)+\mathbf u_q(\mathbf r)\big)^{\mathrm T}
\Big]
\nonumber\\
&\overset{(b)}{=}
-j\mathbf a(\mathbf r)
\Big[
\big(\mathbf u_m(\mathbf r)+\mathbf u_n(\mathbf r)\big)^{\mathrm T}
-
\frac{2}{M}
\sum_{q=1}^{M}
\mathbf u_q^{\mathrm T}(\mathbf r)
\Big]
\nonumber\\
&=
-j\mathbf a(\mathbf r)
\big(\mathbf c_m(\mathbf r)+\mathbf c_n(\mathbf r)\big)^{\mathrm T},
\label{eq:projected_jacobian_centered_gradient}
\end{align}
where \((a)\) is derived from substituting \eqref{eq:app_jacobian_row} into both terms and uses \(|a_{i_{p,q}}(\mathbf r)|^2=1/M^2\), \((b)\) is derived from \(M^{-2}\sum_{p,q}\big(\mathbf u_p(\mathbf r)+\mathbf u_q(\mathbf r)\big)=\frac{2}{M}\sum_{q}\mathbf u_q(\mathbf r)\), and the last equality follows from the definition of the centered phase gradient \(\mathbf c_m(\mathbf r)=\mathbf u_m(\mathbf r)-\frac{1}{M}\sum_{q=1}^{M}\mathbf u_q(\mathbf r)\). Finally, substituting \eqref{eq:projected_jacobian_centered_gradient} into \eqref{eq:app_metric_gram_form} yields
\begin{align}
\mathbf G(\mathbf r)
&=
\frac{\kappa}{M^2}
\sum_{m=1}^{M}\sum_{n=1}^{M}
\big(\mathbf c_m(\mathbf r)+\mathbf c_n(\mathbf r)\big)
\big(\mathbf c_m(\mathbf r)+\mathbf c_n(\mathbf r)\big)^{\mathrm T}
\nonumber\\
&\overset{(c)}{=}
\frac{2\kappa}{M}
\sum_{m=1}^{M}
\mathbf c_m(\mathbf r)\mathbf c_m^{\mathrm T}(\mathbf r),
\label{eq:projected_jacobian_covariance_form}
\end{align}
where \((c)\) derives from \(\sum_{m=1}^{M}\mathbf c_m(\mathbf r)=\mathbf 0\). Substituting \eqref{eq:projected_jacobian_covariance_form} into the local expansion \(B_{\mathbf r}(\boldsymbol\delta)\approx\boldsymbol\delta^{\mathrm T}\mathbf G(\mathbf r)\boldsymbol\delta\) of Proposition~\ref{prop:local_metric} gives exactly \eqref{eq:local_metric_covariance_form}, which completes the proof.
\end{proof}

Combining Theorem~\ref{thm:upa_usw_local_metric} with the confusability region definition in \eqref{eq:anchored_confusability_region}, the exact anchored cell is locally approximated by
\begin{align}
\mathcal E_{\mathrm{conf}}(\mathbf r;\epsilon_p,L)
=
\left\{
\boldsymbol\delta\in\R^3:
\boldsymbol\delta^{\mathrm T}\mathbf G(\mathbf r)\boldsymbol\delta
<
B_{\mathrm{req}}(\epsilon_p,L)
\right\}.
\label{eq:local_confusability_ellipsoid}
\end{align}
This quadratic rule is local, but it already shows the shape of reliable
packing near \(\mathbf r\): offsets inside the ellipsoid are
confusable, while offsets outside it meet the target pairwise reliability.
Since \(\mathbf G(\mathbf r)=\frac{2\kappa}{M}\sum_m\mathbf c_m(\mathbf r)\mathbf c_m^{\mathrm T}(\mathbf r)\) is a real symmetric positive semidefinite matrix, it can be decomposed through the eigendecomposition~\cite{moon2000mathematical}
\begin{align*}
\mathbf G(\mathbf r)
=
\mathbf V(\mathbf r)
\operatorname{diag}\{\lambda_1(\mathbf r),\lambda_2(\mathbf r),\lambda_3(\mathbf r)\}
\mathbf V^{\mathrm T}(\mathbf r),
\end{align*}
with \(\lambda_k(\mathbf r)\ge 0\) and \(\mathbf V(\mathbf r)\) is orthogonal. Therefore, the columns \(\mathbf v_k(\mathbf r)\) of \(\mathbf V(\mathbf r)\) are the principal axes of \(\mathcal E_{\mathrm{conf}}\), with corresponding half-lengths given by
\begin{align*}
\ell_k(\mathbf r;\epsilon_p,L)
=
\sqrt{\frac{B_{\mathrm{req}}(\epsilon_p,L)}{\lambda_k(\mathbf r)}},
\qquad k=1,2,3.
\end{align*}
This expression has a direct codebook interpretation. If we move away from
\(\mathbf r\) along a principal direction with a large eigenvalue, the
reliability distance grows rapidly, so a small separation is  enough to
distinguish two codewords. Hence codewords can be placed more densely along
that direction. If the eigenvalue is small, the reliability distance grows slowly; the
ellipsoid stretches in that direction, and codewords must be spaced farther
apart. 

\begin{remark}[Volumetric packing and its degeneration]
\label{rem:volumetric_packing}
In the near field,
\(\mathbf G(\mathbf r)\) is full rank and all three principal half-lengths are
finite, which makes the local codebook genuinely volumetric. When one eigenvalue of \(\mathbf G(\mathbf r)\) is small, the metric turns nearly  rank
deficient, i.e., the corresponding direction is only weakly distinguishable, and the
local geometry is effectively governed by the two-dimensional subspace spanned by
the two dominant eigenvalues. In the limiting case where this eigenvalue
vanishes, \(\mathbf G(\mathbf r)\) becomes rank deficient and the volumetric geometry
degenerates into a planar geometry. This degeneration emerges when the agent
controllable space moves far away from the array, which will be demonstrated in
the next subsection.
\end{remark}

\subsection{Generalization of the Far-Field 2D Ellipse Geometry}

Building on the 3D ellipsoidal geometry established above, this subsection shows that, as the agent controllable space moves far away from the array, displacements along the range direction (i.e., the \(x\)-axis in our setting) become indistinguishable, so that the 3D ellipsoid collapses into the classical 2D far-field ellipse.

\begin{proposition}
\label{prop:far_field_local_metric}
For the fixed array aperture and the side lengths of \(\mathcal A\), when \(D\to\infty\), the local metric in
\eqref{eq:local_metric_covariance_form} is given by
\begin{align}
\mathbf G(\mathbf r)
\approx
\kappa
\begin{bmatrix}
0 & 0 & 0\\
0 & \alpha_y & 0\\
0 & 0 & \alpha_z
\end{bmatrix},
\label{eq:far_field_metric_limit}
\end{align}
where
\begin{align}
\alpha_y
=
\frac{\pi^2(M_y^2-1)}{6D^2},
\qquad
\alpha_z
=
\frac{\pi^2(M_z^2-1)}{6D^2}.
\label{eq:far_field_metric_coefficients}
\end{align}
Consequently, the 3D near-field reliability geometry degenerates into the 2D
far-field geometry, given by
\begin{align}
B_{\mathbf r}(\boldsymbol\delta)
\approx
\kappa
\left(
\alpha_y\delta_y^2+\alpha_z\delta_z^2
\right).
\label{eq:far_field_local_exponent}
\end{align}
\end{proposition}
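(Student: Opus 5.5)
The plan is to evaluate the closed-form metric of Theorem~\ref{thm:upa_usw_local_metric}, $\mathbf G(\mathbf r)=\frac{2\kappa}{M}\sum_m\mathbf c_m\mathbf c_m^{\mathrm T}$, by expanding the per-element phase gradients $\mathbf u_m(\mathbf r)$ asymptotically in $1/D$. The array aperture and the side lengths of $\mathcal A$ are held fixed while $D\to\infty$. The asymptotic statement is then that every entry of $\mathbf G$ other than the $(y,y)$ and $(z,z)$ entries is of strictly higher order in $1/D$ than $D^{-2}$.

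\textbf{Step 1: expand the phase gradients.} Since $\nabla_{\mathbf r}d_m(\mathbf r)=(\mathbf q-\mathbf p_m)/\|\mathbf q-\mathbf p_m\|$, we have
\begin{align*}
\mathbf u_m=\frac{2\pi}{\lambda}\Big(\frac{\mathbf q-\mathbf p_m}{\|\mathbf q-\mathbf p_m\|}-\frac{\mathbf q}{\|\mathbf q\|}\Big).
\end{align*}
I will Taylor-expand the unit-vector map $\mathbf v\mapsto\mathbf v/\|\mathbf v\|$ at $\mathbf q$ in the perturbation $-\mathbf p_m$. This gives
\begin{align*}
\mathbf u_m=-\frac{2\pi}{\lambda d_0}\big(\mathbf I-\hat{\mathbf q}\hat{\mathbf q}^{\mathrm T}\big)\mathbf p_m+O\!\big(\|\mathbf p_m\|^2/d_0^2\big),
\end{align*}
where $\hat{\mathbf q}=\mathbf q/d_0$. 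Because $\mathbf r\in\mathcal A$ is bounded, $\hat{\mathbf q}=\mathbf e_x+O(1/D)$ and $d_0=D+O(1)$. Since $\mathbf p_m=(0,p_{m,y},p_{m,z})^{\mathrm T}$, this yields
\begin{align*}
\mathbf u_m=-\frac{2\pi}{\lambda D}\big(0,\,p_{m,y},\,p_{m,z}\big)^{\mathrm T}+O(D^{-2}).
\end{align*}
The $x$-component of $\mathbf u_m$ is $O(D^{-2})$: at first order it is killed by the projector $\mathbf I-\mathbf e_x\mathbf e_x^{\mathrm T}$, and it reappears only through the tilt of $\hat{\mathbf q}$ and through the second-order term.

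\textbf{Step 2: centering.} The UPA in \eqref{eq:upa_element_position} is centered at the origin, so $\sum_m\mathbf p_m=\mathbf 0$. Hence the mean of the leading terms vanishes and $\mathbf c_m=\mathbf u_m+O(D^{-2})$ with the same leading term.

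\textbf{Step 3: assemble $\mathbf G$.} Substituting into $\mathbf G$, the leading $D^{-2}$ part is
\begin{align*}
\frac{2\kappa}{M}\Big(\frac{2\pi}{\lambda D}\Big)^2\sum_m\mathbf p_m\mathbf p_m^{\mathrm T}.
\end{align*}
On the separable symmetric grid the cross sum $\sum_m p_{m,y}p_{m,z}$ vanishes. The diagonal averages are
\begin{align*}
\frac1M\sum_m p_{m,y}^2=\frac{d^2(M_y^2-1)}{12},
\end{align*}
which is the variance of a centered uniform grid, and similarly for $z$. With $d=\lambda/2$ this gives
\begin{align*}
\frac{2\kappa}{M}\Big(\frac{2\pi}{\lambda D}\Big)^2\sum_m p_{m,y}^2=\kappa\,\frac{\pi^2(M_y^2-1)}{6D^2}=\kappa\,\alpha_y,
\end{align*}
and likewise $\kappa\,\alpha_z$, matching \eqref{eq:far_field_metric_coefficients}. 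The entries involving $x$ are sums of products with at least one $O(D^{-2})$ factor. Thus the $(x,y)$ and $(x,z)$ entries are $O(D^{-3})$ and the $(x,x)$ entry is $O(D^{-4})$, so they are negligible relative to the $\Theta(D^{-2})$ transverse entries. This gives \eqref{eq:far_field_metric_limit}, and substituting into Proposition~\ref{prop:local_metric} gives \eqref{eq:far_field_local_exponent}.

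\textbf{Main obstacle.} I expect the delicate step to be bookkeeping the orders of the $x$-components of $\mathbf c_m$. The $y$- and $z$-components of the $O(D^{-2})$ remainders also contribute $O(D^{-3})$ cross terms to the transverse block, which I need to show are dominated. I would first write the exact second-order term of the unit-vector expansion explicitly. Then I would check that after centering nothing of order $D^{-1}$ survives in the $x$ direction; this is guaranteed by $p_{m,x}=0$ together with $\hat{\mathbf q}\to\mathbf e_x$.
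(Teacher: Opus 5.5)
Your proposal is correct and follows essentially the same route as the paper. You expand the per-element phase gradients to $\mathbf u_m\approx-\frac{2\pi}{\lambda D}(0,p_{m,y},p_{m,z})^{\mathrm T}$ with an $O(D^{-2})$ range component, use $\sum_m\mathbf p_m=\mathbf 0$ to drop the centering term, and evaluate the UPA second moments $d^2(M_y^2-1)/12$ and $d^2(M_z^2-1)/12$ with vanishing cross moment. The only difference is that the paper applies the Fresnel approximation to $d_m-d_0$ and then differentiates, whereas you differentiate exactly and then expand the unit-vector map; this also gives you explicit $O(D^{-3})$ and $O(D^{-4})$ orders for the neglected entries, which the paper leaves implicit.
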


\begin{proof}
From Theorem~\ref{thm:upa_usw_local_metric}, we have
\begin{align*}
\mathbf G(\mathbf r)
=
\frac{2\kappa}{M}
\sum_{m=1}^{M}
\mathbf c_m(\mathbf r)\mathbf c_m^{\mathrm T}(\mathbf r),
\end{align*}
where
\(\mathbf c_m(\mathbf r)=\mathbf u_m(\mathbf r)-\frac{1}{M}\sum_{q=1}^{M}\mathbf u_q(\mathbf r)\)
and
\(\mathbf u_m(\mathbf r)=\frac{2\pi}{\lambda}\nabla_{\mathbf r}[d_m(\mathbf r)-d_0(\mathbf r)]\).
Hence, to obtain the far-field limit of \(\mathbf G(\mathbf r)\), we first
derive the far-field form of the per-element phase gradient
\(\mathbf u_m(\mathbf r)\), and then obtain
\(\mathbf c_m(\mathbf r)\), and finally evaluate the resulting covariance sum.
 
Let \(p_{m,y}\) and \(p_{m,z}\) denote the \(y\)- and
\(z\)-components of \(\mathbf p_{m_y,m_z}\) in~\eqref{eq:upa_element_position}. For large \(D\), apply the Fresnel approximation
\(\sqrt{(D+x)^2+s}\approx(D+x)+s/[2(D+x)]\), with
\(s=(y-p_{m,y})^2+(z-p_{m,z})^2\) for \(d_m(\mathbf r)\) and
\(s=y^2+z^2\) for \(d_0(\mathbf r)\), we have
\begin{align*}
d_m(\mathbf r)
&\approx
(D+x)
+
\frac{(y-p_{m,y})^2+(z-p_{m,z})^2}{2(D+x)},
\\
d_0(\mathbf r)
&\approx
(D+x)
+
\frac{y^2+z^2}{2(D+x)}.
\end{align*}
Then, we have
\begin{align}
d_m(\mathbf r)-d_0(\mathbf r)
\approx
-\frac{p_{m,y}y+p_{m,z}z}{D+x}
+
\frac{p_{m,y}^2+p_{m,z}^2}{2(D+x)}.
\label{eq:app_far_field_path_expansion}
\end{align}
It can be seen that in the far field, the dependence on x appears only through the denominator \(D+x\). Thus, its derivative is negligible compared with the \(y\) and \(z\) derivatives. Therefore, differentiating \eqref{eq:app_far_field_path_expansion} with respect to \(x\), \(y\), and \(z\) gives
\begin{align*}
\frac{\partial\big[d_m(\mathbf r)-d_0(\mathbf r)\big]}{\partial x}
&=
O\!\big(1/(D+x)^2\big),
\\
\frac{\partial\big[d_m(\mathbf r)-d_0(\mathbf r)\big]}{\partial y}
&=
-\frac{p_{m,y}}{D+x},
\\
\frac{\partial\big[d_m(\mathbf r)-d_0(\mathbf r)\big]}{\partial z}
&=
-\frac{p_{m,z}}{D+x}.
\end{align*}
Therefore, we have
\begin{align*}
\mathbf u_m(\mathbf r)
=
\frac{2\pi}{\lambda}
\nabla_{\mathbf r}
\big[d_m(\mathbf r)-d_0(\mathbf r)\big]
=
-
\frac{2\pi}{\lambda(D+x)}
\begin{bmatrix}
0\\
p_{m,y}\\
p_{m,z}
\end{bmatrix}.
\end{align*}

Moreover, since the UPA in
\eqref{eq:upa_element_position} is centered at the origin, i.e.,
\(\sum_m p_{m,y}=\sum_m p_{m,z}=0\), we have
\(\frac{1}{M}\sum_{q=1}^{M}\mathbf u_q(\mathbf r)=\mathbf 0\),
and thus the centered phase gradient is given by
\begin{align}
\mathbf c_m(\mathbf r)
=
\mathbf u_m(\mathbf r)
-
\frac{1}{M}\sum_{q=1}^{M}\mathbf u_q(\mathbf r)
=
-
\frac{2\pi}{\lambda(D+x)}
\begin{bmatrix}
0\\
p_{m,y}\\
p_{m,z}
\end{bmatrix}.
\label{eq:app_far_field_centered_gradient}
\end{align}
Since \(x\) remains bounded as \(D\to\infty\), we have \(D+x\approx D\).
Substituting \eqref{eq:app_far_field_centered_gradient} into
\eqref{eq:local_metric_covariance_form} yields
\begin{align*}
\mathbf G(\mathbf r)
&\!\!\approx\!\!
\frac{2\kappa}{M}
\left(\frac{2\pi}{\lambda D}\right)^2
\sum_{m=1}^{M}
\begin{bmatrix}
0\\
p_{m,y}\\
p_{m,z}
\end{bmatrix}
\begin{bmatrix}
0 & p_{m,y} & p_{m,z}
\end{bmatrix}
\\
&\!\!=\!\!
\frac{8\pi^2\kappa}{\lambda^2 D^2}
\begin{bmatrix}
0 & 0 & 0\\[2pt]
0 & \frac{1}{M}\sum_{m}p_{m,y}^2 & \frac{1}{M}\sum_{m}p_{m,y}p_{m,z}\\[2pt]
0 & \frac{1}{M}\sum_{m}p_{m,y}p_{m,z} & \frac{1}{M}\sum_{m}p_{m,z}^2
\end{bmatrix}.
\end{align*}

For the diagonal elements, we have
\begin{align*}
\frac{1}{M}\sum_{m=1}^{M}p_{m,y}^2
&=
\frac{d^2}{M_yM_z}
\sum_{m_z=0}^{M_z-1}
\sum_{m_y=0}^{M_y-1}
\left(m_y-\frac{M_y-1}{2}\right)^2
\\
&=
\frac{d^2}{M_y}
\sum_{m_y=0}^{M_y-1}
\left(m_y-\frac{M_y-1}{2}\right)^2
=
\frac{d^2(M_y^2-1)}{12},
\end{align*}
where the last equality follows from the identity \(\sum_{k=0}^{N-1}\big(k-\frac{N-1}{2}\big)^2=\frac{N(N^2-1)}{12}\). Similarly, we have
\(\frac{1}{M}\sum_{m=1}^{M}p_{m,z}^2=\frac{d^2(M_z^2-1)}{12}\).
Moreover, for the off-diagonal elements, we have \(\frac{1}{M}\sum_{m=1}^{M}p_{m,y}p_{m,z}=0\).
Therefore, letting \(d=\lambda/2\), we have
\begin{align*}
\mathbf G(\mathbf r)
\approx
\kappa
\begin{bmatrix}
0 & 0 & 0\\[2pt]
0 & \frac{\pi^2(M_y^2-1)}{6D^2} & 0\\[2pt]
0 & 0 & \frac{\pi^2(M_z^2-1)}{6D^2}
\end{bmatrix},
\end{align*}
which is exactly \eqref{eq:far_field_metric_limit} with \(\alpha_y\) and
\(\alpha_z\) defined in
\eqref{eq:far_field_metric_coefficients}. Finally, substituting
\eqref{eq:far_field_metric_limit} into the local quadratic approximation
\(B_{\mathbf r}(\boldsymbol\delta)\approx
\boldsymbol\delta^{\mathrm T}\mathbf G(\mathbf r)\boldsymbol\delta\) yields
\eqref{eq:far_field_local_exponent}, which completes the
proof.
\end{proof}

Based on Proposition~\ref{prop:far_field_local_metric}, we can see that \(\mathbf G(\mathbf r)\)
is rank deficient along the range direction, so a displacement along \(x\)
does not increase the reliability distance, and the local confusability
ellipsoid in \eqref{eq:local_confusability_ellipsoid} degenerates into the 2D
ellipse
\begin{align*}
\frac{\delta_y^2}{B_{\mathrm{req}}(\epsilon_p,L)/(\kappa\alpha_y)}
+
\frac{\delta_z^2}{B_{\mathrm{req}}(\epsilon_p,L)/(\kappa\alpha_z)}
<
1.
\end{align*}
 Therefore, the 2D ellipse is the transverse
cross-section of the far-field limit of the 3D near-field ellipsoid. Conversely, as the agent controllable space approaches the array, the
wavefront curvature becomes resolvable across the aperture, so the local
metric regains sensitivity to range displacement, and the local reliability
geometry expands from a transverse ellipse to a volumetric ellipsoid.

\section{Near-Field Codebook Design and Capacity Bound}
\label{sec:codebook_design}
The preceding sections characterize the pairwise reliability of physical states. However, a reliable embodied codebook must distinguish every selected state from all its competing states. Therefore, this section lifts the established pairwise reliability criterion to the codebook level design, which constructs a face-centered cubic (FCC) codebook to obtain an achievable lower bound, and derives a geometric packing converse that provides a complementary upper bound.

\subsection{Codebook-Level Reliability}

The pairwise bound \eqref{eq:pairwise_error_bound} controls the confusion between any two states, but reliable communication over codebook \(\mathcal C\) requires the decoder to correctly distinguish the selected state from all \(J-1\) competing states simultaneously. Therefore, based on \eqref{eq:pairwise_error_bound}, a union bound over the \(J-1\) competitors is given by
\begin{align}
P_{\max}^{\star}(\mathcal C)
&\le
\max_{i\in\{1,\ldots,J\}}\sum_{j\ne i}\exp\!\big[-L B(\mathbf r_i,\mathbf r_j)\big]\nonumber\\
&\le
(J-1)\,e^{-L B_{\min}(\mathcal C)},
\label{eq:codebook_union_bound}
\end{align}
where \(B_{\min}(\mathcal C)=\min_{i\ne j}B(\mathbf r_i,\mathbf r_j)\) is the minimum pairwise Bhattacharyya distance of the codebook.  Requiring the right-hand side of \eqref{eq:codebook_union_bound} to equal to the target error probability \(\epsilon\), i.e., \((J-1)\,e^{-L B_{\min}(\mathcal C)}=\epsilon\), we obtain the codebook-level Bhattacharyya threshold
\begin{align}
B_J(\epsilon,L)
=
\frac{1}{L}\log\frac{J-1}{\epsilon}.
\label{eq:codebook_bhattacharyya_threshold}
\end{align}
Compared with the \(B_{\mathrm{req}}(\epsilon_p,L)\) in \eqref{eq:required_exponent}, we have \(B_J(\epsilon,L)=B_{\mathrm{req}}\!\big(\epsilon/(J-1),L\big)\), which means guaranteeing codebook-level reliability is equivalent to requiring each selected state to be pairwise resolvable at the tightened budget \(\epsilon_p=\epsilon/(J-1)\).

Based on the above observation, the codebook \(\mathcal C\) satisfies the \(\epsilon\)-reliability constraint \eqref{eq:max_reliable_alphabet_size_constraint}, i.e., \(P_{\max}^{\star}(\mathcal C)\le\epsilon\), whenever its minimum pairwise Bhattacharyya distance is no smaller than the codebook-level threshold, i.e.,
\begin{align}
B_{\min}(\mathcal C)\ge B_J(\epsilon,L).
\label{eq:codebook_reliability_condition}
\end{align}

\subsection{FCC Codebook Design}

Based on the established ellipsoidal geometry, the reliability condition
\eqref{eq:codebook_reliability_condition} can be evaluated through the
\(\boldsymbol\delta^{\mathrm T}\mathbf G(\mathbf r)\boldsymbol\delta\) in~\eqref{eq:local_metric_covariance_form}. To absorb the local approximation error, we introduce a penalty factor \(\xi\ge0\) with
\begin{equation}
\label{eq:margin_forbidden_ellipsoid}
\boldsymbol\delta^{\mathrm T}\mathbf G(\mathbf r)\boldsymbol\delta
\ge
(1+\xi)B_J(\epsilon,L).
\end{equation}
Since \(\mathbf G(\mathbf r)\succeq\mathbf 0\), we can
define the transformation \(\widetilde{\boldsymbol\delta}=\mathbf G^{1/2}(\mathbf r)\boldsymbol\delta\).
Then, \eqref{eq:margin_forbidden_ellipsoid} is equivalently transformed as
\begin{align}
\|\widetilde{\boldsymbol\delta}\|_2\ge d_J,
\quad\text{s.t.}\quad
d_J = \sqrt{(1+\xi)B_J(\epsilon,L)}.
\label{eq:fcc_minimum_distance}
\end{align}
Hence, the codebook design in the transformed space becomes a packing problem under the minimum-distance requirement~\(d_J\). As the FCC lattice achieves the densest packing in three-dimensional space~\cite{conway2013sphere},  the  corresponding lattice generator is given by
\begin{align}
\mathbf G_{\mathrm{FCC}}(d_J)
=
\frac{d_J}{\sqrt 2}
\begin{bmatrix}
0 & 1 & 1\\
1 & 0 & 1\\
1 & 1 & 0
\end{bmatrix},
\label{eq:fcc_generator}
\end{align}
which spans the FCC lattice
\(\Lambda_{\mathrm{FCC}}(d_J)=\{\mathbf G_{\mathrm{FCC}}(d_J)\mathbf n:
\mathbf n\in\mathbb Z^3\}\). 
Since the shortest nonzero vector of \(\Lambda_{\mathrm{FCC}}(d_J)\) has norm
\(d_J\), every pair of codewords satisfies the required separation condition~\eqref{eq:fcc_minimum_distance}.

To determine the codebook size, it remains to count the lattice points within the transformed controllable region \(\widetilde{\mathcal A}\). Since the local transformation \(\mathbf G^{1/2}(\mathbf r)\) scales the volume element at \(\mathbf r\) by \(\sqrt{\det\mathbf G(\mathbf r)}\), the effective volume of \(\widetilde{\mathcal A}\) can be computed as
\begin{align}
\mathcal V_G
=
\int_{\mathcal A}\sqrt{\det\mathbf G(\mathbf r)}\,d\mathbf r.
\label{eq:metric_volume}
\end{align}
Meanwhile, based on~\eqref{eq:fcc_generator}, each FCC
codeword occupies one fundamental cell space of the volume
\begin{align}
V_{\mathrm{cell}}(d_J)
=
\left|\det \mathbf G_{\mathrm{FCC}}(d_J)\right|
=
\frac{d_J^3}{\sqrt 2}.
\label{eq:fcc_cell_volume}
\end{align}
Therefore, combining \eqref{eq:metric_volume} and \eqref{eq:fcc_cell_volume} and ignoring boundary effects, the codebook size is given by
\begin{equation}
\label{eq:implicit_fcc_size}
J
\approx
\frac{\mathcal V_G}{V_{\mathrm{cell}}(d_J)}
=
\frac{
\sqrt 2\,\int_{\mathcal A}\sqrt{\det\mathbf G(\mathbf r)}\,d\mathbf r
}{
\big[(1+\xi)B_J(\epsilon,L)\big]^{3/2}
}.
\end{equation}
However, \(B_J(\epsilon,L)\) itself depends on \(J\),
\eqref{eq:implicit_fcc_size} is an implicit equation in \(J\).  The following theorem demonstrates that the 
achievable codebook size admits a closed-form independent of \(J\),  and thereby establishes an achievable lower bound on the \(\epsilon\)-capacity.

\begin{theorem}
\label{thm:fcc_design}
The FCC codebook achieves the alphabet size
\begin{align}
J_{\mathrm{FCC}}(L,\epsilon,\xi)
&\approx
\left\lfloor
\frac{
\left(\frac{2}{3}\right)^{3/2}\Psi(\xi)
}{
\left[
W_0\!\left(
\frac{2}{3}
\left(\frac{\Psi(\xi)}{\epsilon}\right)^{2/3}
\right)
\right]^{3/2}
}
\right\rfloor,
\label{eq:fcc_closed_form_size}
\end{align}
where
\(\Psi(\xi)=
\sqrt 2\,\mathcal V_G L^{3/2}/(1+\xi)^{3/2}\), and \(W_0(\cdot)\) denotes
the principal branch of the Lambert-\(W\) function. Hence, the corresponding
achievable rate is given by
\begin{align}
R_{\mathrm{FCC}}(L,\epsilon,\xi)
=
\frac{1}{LT_p}\log_2 J_{\mathrm{FCC}}(L,\epsilon,\xi),
\label{eq:fcc_capacity_lower_bound}
\end{align}
which is an achievable lower bound on the \(\epsilon\)-capacity $C_\epsilon(L)$.
\end{theorem}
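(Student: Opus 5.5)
Our plan is to solve the implicit relation \eqref{eq:implicit_fcc_size} for \(J\) in closed form, and then argue that the resulting codebook meets the \(\epsilon\)-reliability constraint, which makes its rate a lower bound on \(C_\epsilon(L)\). We start by substituting \(B_J(\epsilon,L)=\frac1L\log\frac{J-1}{\epsilon}\) from \eqref{eq:codebook_bhattacharyya_threshold} into \eqref{eq:implicit_fcc_size}. This gives
\begin{align*}
J\Big[\log\tfrac{J-1}{\epsilon}\Big]^{3/2}=\frac{\sqrt2\,\mathcal V_G L^{3/2}}{(1+\xi)^{3/2}}=\Psi(\xi).
\end{align*}
In the regime of interest \(J\gg1\), so we replace \(J-1\) by \(J\); this is consistent with the ``\(\approx\)'' in the statement. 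The equation then reads \(J(\log(J/\epsilon))^{3/2}=\Psi\).

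Next we reduce this equation to Lambert form. Raising both sides to the power \(2/3\) gives \(J^{2/3}\log(J/\epsilon)=\Psi^{2/3}\). We set \(w=\tfrac23\log(J/\epsilon)\), so that \(J=\epsilon e^{3w/2}\) and \(J^{2/3}=\epsilon^{2/3}e^{w}\). Substituting yields \(\epsilon^{2/3}e^{w}\cdot\tfrac32 w=\Psi^{2/3}\), that is,
\begin{align*}
w e^{w}=\tfrac23\big(\Psi/\epsilon\big)^{2/3}.
\end{align*}
The right-hand side is positive, so the real solution is unique and equals \(w=W_0\big(\tfrac23(\Psi/\epsilon)^{2/3}\big)\). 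To return to \(J\), we use \(e^{w}=\tfrac23(\Psi/\epsilon)^{2/3}/w\). This gives \(J=\epsilon e^{3w/2}=\epsilon\,(\tfrac23)^{3/2}(\Psi/\epsilon)/w^{3/2}=(\tfrac23)^{3/2}\Psi/w^{3/2}\), which matches \eqref{eq:fcc_closed_form_size}. Flooring is needed because \(J\) must be an integer. Flooring only lowers \(J\), and since \(B_J\) increases in \(J\), the required separation \(d_J\) shrinks, so feasibility is preserved. A short monotonicity check settles this: the map \(J\mapsto J(\log(J/\epsilon))^{3/2}\) is increasing for \(J>\epsilon\), so any \(J\) at or below the root keeps the lattice point count at least \(J\).

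For achievability, we take the FCC lattice \(\Lambda_{\mathrm{FCC}}(d_J)\) in the transformed coordinates. Its minimum distance is \(d_J\), so every pair satisfies \eqref{eq:margin_forbidden_ellipsoid}. Under the local approximation of Theorem~\ref{thm:upa_usw_local_metric}, with \(\xi\) absorbing the approximation error, this gives \(B(\mathbf r_i,\mathbf r_j)\ge B_J\) for all pairs, hence \eqref{eq:codebook_reliability_condition}. The union bound \eqref{eq:codebook_union_bound} then yields \(P^\star_{\max}(\mathcal C)\le\epsilon\). It follows that \(J_\epsilon^\star(L)\ge J_{\mathrm{FCC}}\), and \eqref{eq:epsilon_capacity} gives \(C_\epsilon(L)\ge R_{\mathrm{FCC}}\).

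The main obstacle is rigor in this step, not the algebra. We rely on three approximations: that the lattice point count is about \(\mathcal V_G/V_{\mathrm{cell}}\) (ignoring boundary effects and the nonconstant \(\mathbf G(\mathbf r)\)), on \(J-1\approx J\), and on the quadratic surrogate for \(B\). We intend to state the result at the same ``\(\approx\)'' level as \eqref{eq:implicit_fcc_size}. We also plan to note that choosing \(\xi\) large enough makes the quadratic surrogate a conservative lower bound on the true Bhattacharyya distance between neighboring codewords.
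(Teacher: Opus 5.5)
Your proposal is correct and follows essentially the same route as the paper: you substitute $B_J\approx\frac{1}{L}\log(J/\epsilon)$ into the implicit FCC count, reduce to $w e^{w}=\frac{2}{3}(\Psi/\epsilon)^{2/3}$ (your $w$ is the paper's $2u/3$ with $u=\log(J/\epsilon)$), and invert with $W_0$. Your explicit achievability chain (minimum distance $d_J$, then $B_{\min}\ge B_J$, then the union bound) and the monotonicity argument justifying the floor are not written out in the paper's proof, which leaves them to the preceding discussion, but they are consistent with it.
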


\begin{proof}
For sufficiently large \(J\), the codebook-level threshold
\eqref{eq:codebook_bhattacharyya_threshold} satisfies
\begin{align*}
B_J(\epsilon,L)
=
\frac{1}{L}\log\frac{J-1}{\epsilon}
\approx
\frac{1}{L}\log\frac{J}{\epsilon}.
\end{align*}
Substituting this approximation into
\eqref{eq:implicit_fcc_size} gives
\begin{align}
\label{eq:implicit_fcc_size_approximation}
J
&\approx
\frac{\Psi(\xi)}{[\log(J/\epsilon)]^{3/2}},
\end{align}
where \(\Psi(\xi)=
\sqrt 2\,\mathcal V_G L^{3/2}/(1+\xi)^{3/2}\).
Let \(u=\log(J/\epsilon)\), so that \(J=\epsilon e^u\). Then, the 
equation~\eqref{eq:implicit_fcc_size_approximation} can be rewritten as
\begin{align*}
\epsilon e^u u^{3/2}
=
\Psi(\xi).
\end{align*}
Raising both sides to the power \(2/3\) and rearranging yields
\begin{align*}
\frac{2u}{3}e^{2u/3}
=
\frac{2}{3}
\left(\frac{\Psi(\xi)}{\epsilon}\right)^{2/3}.
\end{align*}
By the definition of the principal Lambert-\(W\) function, we have
\begin{align}
\label{eq:implicit_fcc_size_solution}
u
=
\frac{3}{2}W_0\!\left(
\frac{2}{3}
\left(\frac{\Psi(\xi)}{\epsilon}\right)^{2/3}
\right).
\end{align}
Substituting~\eqref{eq:implicit_fcc_size_solution} into
\eqref{eq:implicit_fcc_size_approximation}, i.e.,
\(J\approx \Psi(\xi)/u^{3/2}\), yields \eqref{eq:fcc_closed_form_size}. Finally, substituting
\eqref{eq:fcc_closed_form_size} into the rate expression
\eqref{eq:epsilon_capacity} yields \eqref{eq:fcc_capacity_lower_bound}, which
completes the proof.
\end{proof}

\begin{remark}[Generalization to far-field hexagonal packing]
\label{rem:far_field_hexagonal_packing}
Proposition~\ref{prop:far_field_local_metric} demonstrates
that, in the far field,
the range direction becomes indistinguishable and the position-dependent 3D
metric reduces to the position-invariant transverse metric 
\begin{align*}
\mathbf G_{\mathrm{FF}}
&=
\kappa\operatorname{diag}\{\alpha_y,\alpha_z\}.
\end{align*}
Accordingly, the usable codeword region reduces to 2D area in the \(yz\)-direction, and the corresponding
 transformation \(\widetilde{\mathbf r}_{\mathrm{FF}}=
\mathbf G_{\mathrm{FF}}^{1/2}\mathbf r_{\mathrm{FF}}\) maps the 2D controllable area to
\begin{align*}
\mathcal V_{\mathrm{FF}}
&=
a_ya_z\sqrt{\det\mathbf G_{\mathrm{FF}}}
=
\kappa a_ya_z\sqrt{\alpha_y\alpha_z}.
\end{align*}
The 3D packing problem therefore reduces to a 2D packing problem, for which the densest packing is the hexagonal lattice~\cite{conway2013sphere,zamir2014lattice}. For the
minimum distance \(d_J\) in \eqref{eq:fcc_minimum_distance}, the corresponding generator and
its fundamental-cell area are given by
\begin{align*}
\mathbf G_{\mathrm{hex}}(d_J)
&=
d_J
\begin{bmatrix}
1 & 1/2\\
0 & \sqrt 3/2
\end{bmatrix},
&
V_{\mathrm{hex}}(d_J)
&=
\frac{\sqrt 3}{2}d_J^2.
\end{align*}
Therefore, the codebook size is given by
\begin{align*}
J_{\mathrm{hex}}
&\approx
\frac{\mathcal V_{\mathrm{FF}}}{V_{\mathrm{hex}}(d_J)}
=
\frac{2\kappa a_ya_z\sqrt{\alpha_y\alpha_z}}
{\sqrt 3(1+\xi)B_J(\epsilon,L)}.
\end{align*}
Similar to the proof of Theorem~\ref{thm:fcc_design}, the corresponding far-field codebook size and
achievable rate can be derived as
\begin{align*}
J_{\mathrm{hex}}(L,\epsilon,\xi)
&\approx
\left\lfloor
\frac{\Psi_{\mathrm{FF}}(\xi)}
{W_0\!\left(\Psi_{\mathrm{FF}}(\xi)/\epsilon\right)}
\right\rfloor,
\\
R_{\mathrm{hex}}(L,\epsilon,\xi)
&=
\frac{1}{LT_p}\log_2 J_{\mathrm{hex}}(L,\epsilon,\xi).
\end{align*}
where \(\Psi_{\mathrm{FF}}(\xi)=
2\kappa a_ya_z\sqrt{\alpha_y\alpha_z}L/[\sqrt 3(1+\xi)]\). 
Since the same controllable region supports 3D packing in the near field but only 2D packing in the far field, the additional range resolution of the near-field implies a achievable capacity gain.
\end{remark}

\subsection{Geometric Converse}

The FCC codebook establishes an achievable lower bound on the \(\epsilon\)-capacity. We now derive an upper bound from a necessary condition on any \(\epsilon\)-reliable codebook, i.e., every pair of codewords must remain \(\epsilon\)-distinguishable even in the absence of all others. Therefore, we first derive this pairwise separation requirement from optimal binary detection, and then convert it into a geometric packing bound on the codebook design.

To this end, consider the binary discrimination between two positions
\(\mathbf r_i,\mathbf r_j\in\mathcal A\) under equal priors, and let
\(P_b^{\star}\) denote the minimum Bayes error probability over all
binary tests. Since no detector can achieve a smaller error probability, \(P_b^{\star}>\epsilon\) implies that the two positions cannot be distinguished at the target reliability. The optimal binary Bayes error is given by~\cite[Eq. (44)]{nielsen2014generalized}
\begin{align*}
P_b^{\star}
&=
\frac{1}{2}\left[
1-\left\|
p(\mathbf Y\mid\mathbf r_i)-p(\mathbf Y\mid\mathbf r_j)
\right\|_{\mathrm{TV}}
\right],
\end{align*}
where
\(\|\cdot\|_{\mathrm{TV}}=\frac{1}{2}\int|\cdot|\,d\mathbf Y\)
denotes the total variation distance. The following lemma lower-bounds \(P_b^{\star}\) in terms of the
Bhattacharyya distance.

\begin{lemma}
\label{lem:binary_bayes_bhattacharyya}
For any two positions \(\mathbf r_i,\mathbf r_j\in\mathcal A\), the minimum
binary Bayes error probability \(P_b^{\star}\) over \(L\) independent
snapshots satisfies
\begin{align}
P_b^{\star}
&\ge
\frac{1}{2}
\left(
1-\sqrt{1-\exp[-2LB(\mathbf r_i,\mathbf r_j)]}
\right).
\label{eq:binary_bayes_bhattacharyya_lower_bound}
\end{align}
\end{lemma}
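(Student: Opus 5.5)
The plan is to use the standard Le Cam-type inequality that lower-bounds the total variation affinity by the Bhattacharyya coefficient. Two facts combine: the Bayes error formula quoted just before the lemma, and the tensorization of the Bhattacharyya coefficient over independent snapshots.

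First, write $p_i=p(\mathbf Y\mid\mathbf r_i)$ and $p_j=p(\mathbf Y\mid\mathbf r_j)$ for the $L$-snapshot joint densities. The Bayes error is
\begin{align*}
P_b^{\star}=\frac12\int\min(p_i,p_j)\,d\mathbf Y .
\end{align*}
This agrees with the total variation expression above because $\min(a,b)=\frac{a+b-|a-b|}{2}$. Next, the joint densities factorize across snapshots, exactly as in the problem formulation. Hence the joint Bhattacharyya coefficient is
\begin{align*}
\operatorname{BC}_L\triangleq\int\sqrt{p_ip_j}\,d\mathbf Y=\operatorname{BC}(\mathbf r_i,\mathbf r_j)^L=\exp[-LB(\mathbf r_i,\mathbf r_j)],
\end{align*}
by the definition \eqref{eq:bhattacharyya_distance_def}. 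This is the same factorization already used in Lemma~\ref{lem:bhattacharyya_bound}.

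The key step relates $\int\min$ to $\operatorname{BC}_L$. Use the identity $\sqrt{p_ip_j}=\sqrt{\min(p_i,p_j)\max(p_i,p_j)}$ and apply Cauchy--Schwarz:
\begin{align*}
\operatorname{BC}_L^2\le\Big(\int\min(p_i,p_j)\Big)\Big(\int\max(p_i,p_j)\Big).
\end{align*}
Set $m=\int\min(p_i,p_j)=2P_b^{\star}$. Since $\int\max(p_i,p_j)=2-m$, this gives $m(2-m)\ge \operatorname{BC}_L^2$, which is the quadratic inequality $(1-m)^2\le 1-\operatorname{BC}_L^2$.

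Because $m\in[0,1]$, we have $1-m\ge 0$, so we may take square roots to get $m\ge 1-\sqrt{1-\operatorname{BC}_L^2}$. Dividing by two and substituting $\operatorname{BC}_L^2=\exp[-2LB(\mathbf r_i,\mathbf r_j)]$ yields exactly \eqref{eq:binary_bayes_bhattacharyya_lower_bound}.

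The only delicate point is the Cauchy--Schwarz step together with choosing the correct root of the quadratic. Both are routine, so no real obstacle is expected. One should only note that equal priors are used in the Bayes error formula.
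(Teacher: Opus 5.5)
Your proof is correct and is essentially the paper's argument: both use the tensorized coefficient $\operatorname{BC}(\mathbf r_i,\mathbf r_j)^L=\exp[-LB(\mathbf r_i,\mathbf r_j)]$ and a single Cauchy--Schwarz step to obtain $\operatorname{BC}_L^2\le 1-\|p_i-p_j\|_{\mathrm{TV}}^2$, then substitute into the equal-prior Bayes error formula. The only difference is cosmetic. The paper factors $|p_i-p_j|=|\sqrt{p_i}-\sqrt{p_j}|(\sqrt{p_i}+\sqrt{p_j})$ to upper-bound the total variation. You instead factor $\sqrt{p_ip_j}=\sqrt{\min(p_i,p_j)\max(p_i,p_j)}$ to lower-bound $\int\min(p_i,p_j)=1-\|p_i-p_j\|_{\mathrm{TV}}$ directly. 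This gives the same inequality, since $\int\max(p_i,p_j)=1+\|p_i-p_j\|_{\mathrm{TV}}$.
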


\begin{proof}
The total variation distance between the two \(L\)-snapshot densities can be
bounded as
\begin{align*}
&\left\|
p(\mathbf Y\mid\mathbf r_i)-p(\mathbf Y\mid\mathbf r_j)
\right\|_{\mathrm{TV}}
\\
&=
\frac{1}{2}\int
\Big|\sqrt{p(\mathbf Y\mid\mathbf r_i)}-\sqrt{p(\mathbf Y\mid\mathbf r_j)}\Big|
\\
&\qquad\qquad\times
\Big(\sqrt{p(\mathbf Y\mid\mathbf r_i)}+\sqrt{p(\mathbf Y\mid\mathbf r_j)}\Big)
\,d\mathbf Y
\\
&\overset{(a)}{\le}
\frac{1}{2}
\left[\int
\Big(\sqrt{p(\mathbf Y\mid\mathbf r_i)}-\sqrt{p(\mathbf Y\mid\mathbf r_j)}\Big)^2
d\mathbf Y\right]^{1/2}
\\
&\qquad\quad\times
\left[\int
\Big(\sqrt{p(\mathbf Y\mid\mathbf r_i)}+\sqrt{p(\mathbf Y\mid\mathbf r_j)}\Big)^2
d\mathbf Y\right]^{1/2}
\\
&=
\left[
1-\left(
\int\sqrt{p(\mathbf Y\mid\mathbf r_i)\,p(\mathbf Y\mid\mathbf r_j)}\,d\mathbf Y
\right)^2
\right]^{1/2},
\\
&\overset{(b)}{=}
\sqrt{1-\exp[-2LB(\mathbf r_i,\mathbf r_j)]}
\end{align*}
where \((a)\) follows from the Cauchy--Schwarz inequality and \((b)\) follows from the definition of the Bhattacharyya distance in \eqref{eq:bhattacharyya_distance_def}.
 Substituting the above upper bound
into \(P_b^{\star}=\frac{1}{2}\big[1-\|p(\mathbf Y\mid\mathbf r_i)-p(\mathbf
Y\mid\mathbf r_j)\|_{\mathrm{TV}}\big]\) yields
\eqref{eq:binary_bayes_bhattacharyya_lower_bound}, which completes the proof.
\end{proof}

For \(0<\epsilon<1/2\), the lower bound in
\eqref{eq:binary_bayes_bhattacharyya_lower_bound} is strictly larger than
\(\epsilon\) whenever
\begin{align}
B(\mathbf r_i,\mathbf r_j)
<
B_{\mathrm{nec}}(\epsilon,L)
=
\frac{1}{2L}
\log\frac{1}{4\epsilon(1-\epsilon)},
\label{eq:necessary_bhattacharyya_threshold}
\end{align}
and hence no detector can distinguish the two positions at the target
reliability. Therefore, every codeword pair in an \(\epsilon\)-reliable
codebook must satisfy
\(B(\mathbf r_i,\mathbf r_j)\ge B_{\mathrm{nec}}(\epsilon,L)\).

To convert this Bhattacharyya separation into a geometric packing
constraint, we introduce a uniform Euclidean exclusion radius over
\(\mathcal A\), defined as the minimum separation at which the necessary
threshold \eqref{eq:necessary_bhattacharyya_threshold} can be satisfied, i.e.,
\begin{align}
d_{\mathrm{nec}}(\epsilon,L)
=
\inf_{\substack{
\mathbf r,\,\mathbf r+\boldsymbol\delta\in\mathcal A:\\
B_{\mathbf r}(\boldsymbol\delta)\ge B_{\mathrm{nec}}(\epsilon,L)
}}
\|\boldsymbol\delta\|_2.
\label{eq:necessary_euclidean_radius}
\end{align}
Hence, two positions closer than \(d_{\mathrm{nec}}(\epsilon,L)\) cannot
coexist in an \(\epsilon\)-reliable codebook. Moreover, applying the local
quadratic approximation \eqref{eq:local_metric_covariance_form} gives
\(B_{\mathbf r}(\boldsymbol\delta)
\approx
\boldsymbol\delta^{\mathrm T}\mathbf G(\mathbf r)\boldsymbol\delta
\le
\lambda_{\max}^{\mathrm{sup}}\|\boldsymbol\delta\|_2^2\), where
\(\lambda_{\max}^{\mathrm{sup}}=
\sup_{\mathbf r\in\mathcal A}\lambda_{\max}(\mathbf G(\mathbf r))\), which
yields an estimate of this exclusion radius.
\begin{align*}
d_{\mathrm{nec}}(\epsilon,L)
\approx
\sqrt{
\frac{B_{\mathrm{nec}}(\epsilon,L)}{\lambda_{\max}^{\mathrm{sup}}}
}.
\end{align*}

Based on the established exclusion radius \(d_{\mathrm{nec}}(\epsilon,L)\),
the converse reduces to a sphere-packing problem, and the following theorem demonstrates that the
capacity upper bound can be derived in closed form.

\begin{theorem}
\label{thm:geometric_packing_converse}
The upper bound on the size of any
\(\epsilon\)-reliable codebook is given by
\begin{align}
\Gamma_{\mathrm{upp}}(\epsilon,L)
=&\,
1
+\frac{3S_1}{2d_{\mathrm{nec}}(\epsilon,L)}
+\frac{6S_2}{\pi d_{\mathrm{nec}}^{2}(\epsilon,L)}
\nonumber\\
&+\frac{6V_{\mathcal A}}{\pi d_{\mathrm{nec}}^{3}(\epsilon,L)},
\label{eq:geometric_packing_bound}
\end{align}
where \(S_1=a_x+a_y+a_z\),
\(S_2=a_xa_y+a_xa_z+a_ya_z\), and \(V_{\mathcal A}=a_xa_ya_z\). Hence, the
corresponding upper bound on the \(\epsilon\)-capacity \(C_\epsilon(L)\) is
given by
\begin{align}
R_{\mathrm{upp}}(L,\epsilon)
=
\frac{1}{LT_p}
\log_2\Gamma_{\mathrm{upp}}(\epsilon,L).
\label{eq:geometric_capacity_upper_bound}
\end{align}
\end{theorem}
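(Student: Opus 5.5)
The plan is a classical sphere-packing (volume) argument in the original Euclidean coordinates, with the enlarged region handled by the Steiner formula for a box.

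\emph{Step 1: pairwise separation.} Let \(\mathcal C=\{\mathbf r_1,\ldots,\mathbf r_J\}\subseteq\mathcal A\) be any codebook with \(P_{\max}^{\star}(\mathcal C)\le\epsilon\), where \(0<\epsilon<1/2\). Restricting the decoder to any pair \(\{i,j\}\) cannot increase the error, so the equal-prior binary Bayes error satisfies \(P_b^{\star}\le\epsilon\) for every pair. By Lemma~\ref{lem:binary_bayes_bhattacharyya} and \eqref{eq:necessary_bhattacharyya_threshold}, this forces \(B(\mathbf r_i,\mathbf r_j)\ge B_{\mathrm{nec}}(\epsilon,L)\). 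By the definition of the infimum in \eqref{eq:necessary_euclidean_radius}, any pair in \(\mathcal A\) with \(B\ge B_{\mathrm{nec}}\) satisfies \(\|\mathbf r_i-\mathbf r_j\|_2\ge d_{\mathrm{nec}}(\epsilon,L)\triangleq d\). Hence \(\mathcal C\) is a \(d\)-separated point set in \(\mathcal A\).

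\emph{Step 2: disjoint balls.} Place a closed ball \(\mathcal B(\mathbf r_j,d/2)\) around each codeword. The interiors of these \(J\) balls are pairwise disjoint, and all of them lie in the Minkowski sum \(\mathcal A\oplus\mathcal B(\mathbf 0,d/2)\). Comparing volumes gives
\[
J\cdot\tfrac{\pi d^3}{6}\le \operatorname{vol}\big(\mathcal A\oplus\mathcal B(\mathbf 0,d/2)\big).
\]

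\emph{Step 3: Steiner formula.} The main computation is the exact volume of the parallel body of the box, for which I would use Steiner's formula. For a convex body with volume \(V\), surface area \(S\), and mean width parameter \(M_1\), the parallel body at radius \(\rho\) has volume \(V+S\rho+M_1\rho^2+\tfrac{4}{3}\pi\rho^3\). For a box this is verified directly by decomposing \(\mathcal A\oplus\mathcal B(\mathbf 0,\rho)\) into four kinds of pieces:
\begin{itemize}
\item the box itself, contributing \(V_{\mathcal A}\);
\item six slabs over the faces, contributing \(2S_2\rho\);
\item twelve quarter-cylinders along the edges, contributing \(\tfrac{\pi\rho^2}{4}\cdot 4S_1=\pi S_1\rho^2\);
\item eight sphere octants at the corners, contributing \(\tfrac{4}{3}\pi\rho^3\).
\end{itemize}
Setting \(\rho=d/2\) gives
\[
V_{\mathcal A}+S_2d+\tfrac{\pi S_1 d^2}{4}+\tfrac{\pi d^3}{6}.
\]
Dividing by \(\pi d^3/6\) yields exactly \(\Gamma_{\mathrm{upp}}\) in \eqref{eq:geometric_packing_bound}.

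\emph{Step 4: conclusion.} Since \(J\le\Gamma_{\mathrm{upp}}\) holds for every feasible codebook, it holds for \(J_\epsilon^\star(L)\). Monotonicity of \(\log_2\) in \eqref{eq:epsilon_capacity} then gives \eqref{eq:geometric_capacity_upper_bound}.

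\emph{Main obstacle.} The packing and Steiner steps are routine. The delicate point is Step~1: the exclusion radius must be the exact infimum \eqref{eq:necessary_euclidean_radius}, not its quadratic estimate via \(\lambda_{\max}^{\mathrm{sup}}\), for the bound to be rigorous. If the approximate \(d_{\mathrm{nec}}\) is used, the result becomes an approximate converse.
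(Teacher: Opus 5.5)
Your proposal is correct and follows the paper's proof essentially step for step. It uses $d_{\mathrm{nec}}$-separation of codewords, disjoint balls of radius $d_{\mathrm{nec}}/2$ inside the parallel body of the box, and the same box/face-slab/quarter-cylinder/octant decomposition giving volume $V_{\mathcal A}+2S_2r+\pi S_1r^2+\tfrac{4}{3}\pi r^3$. Your explicit reduction from the $J$-ary decoder to an equal-prior binary test (declare $i$ iff $g(\mathbf Y)=i$) makes rigorous the pairwise-necessity step that the paper only asserts informally. Your caveat is also apt: the bound is rigorous only with the exact infimum defining $d_{\mathrm{nec}}$, not its $\lambda_{\max}^{\mathrm{sup}}$-based estimate.
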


\begin{proof}
Let \(d_{\mathrm{nec}}=d_{\mathrm{nec}}(\epsilon,L)\) and
\(r=d_{\mathrm{nec}}/2\). Based on
\eqref{eq:necessary_euclidean_radius}, distinct codewords are separated by at
least \(2r\). Hence, the radius-\(r\) balls centered at the codewords have
disjoint interiors. Since every codeword lies in \(\mathcal A\), all these
balls are contained in the expanded region
\(\mathcal A_{\mathrm{exp}}(r)\triangleq
\{\mathbf u+\mathbf v:\mathbf u\in\mathcal A,\ \|\mathbf v\|_2\le r\}\),
and therefore
\begin{align}
J\frac{4\pi r^3}{3}
\le
\operatorname{Vol}\!\left(\mathcal A_{\mathrm{exp}}(r)\right).
\label{eq:app_sphere_packing_volume}
\end{align}
For the rectangular region \(\mathcal A\), the expanded region consists of
the original box, six face slabs, twelve quarter-cylinder edge pieces, and
eight spherical octants. Their total volume is
\begin{align}
\operatorname{Vol}\!\left(\mathcal A_{\mathrm{exp}}(r)\right)
=
V_{\mathcal A}
+2S_2r
+\pi S_1r^2
+\frac{4\pi r^3}{3},
\label{eq:app_box_parallel_volume}
\end{align}
where \(V_{\mathcal A}=a_xa_ya_z\), \(S_2=a_xa_y+a_xa_z+a_ya_z\), and \(S_1=a_x+a_y+a_z\). Substituting \eqref{eq:app_box_parallel_volume} into
\eqref{eq:app_sphere_packing_volume} and letting
\(r=d_{\mathrm{nec}}/2\) yields
\begin{align*}
J
&\le
1
+\frac{3S_1}{2d_{\mathrm{nec}}}
+\frac{6S_2}{\pi d_{\mathrm{nec}}^2}
+\frac{6V_{\mathcal A}}{\pi d_{\mathrm{nec}}^3}
=
\Gamma_{\mathrm{upp}}(\epsilon,L),
\end{align*}
which drives the~\eqref{eq:geometric_packing_bound}. Substituting
this bound into \eqref{eq:epsilon_capacity} yields
\eqref{eq:geometric_capacity_upper_bound}, which completes the proof.
\end{proof}

\section{Numerical Results}
\label{sec:numerics}
This section numerically evaluates the near-field reliability geometry and the achievable performance of the proposed embodied communication system. Unless otherwise specified, the simulation parameters are set as follows. The carrier frequency is \(f_c=6\)~GHz, corresponding to the wavelength \(\lambda=5\)~cm. The BS is equipped with a \(32\times32\) UPA with half-wavelength spacing. The corresponding physical aperture is \(0.78\)~m per side, and the array diagonal is approximately \(1.10\)~m, which yields a Rayleigh distance of 48~m~\cite{leiUnifiedDistributedAlgorithm2026,Cui1}. The center of the agent-controllable region is placed at \(D=5\)~m from the BS array center. The controllable region is a cube with \(a_x=a_y=a_z=2\)~m. The sensing SNR is \(\gamma_0=20\)~dB, the default number of sensing snapshots is \(L=5\), and the target maximum decoding error probability is \(\epsilon=10^{-3}\). We consider the normalized rate \(R=(1/L)\log_2 J\) in bits per probing duration.

\subsection{Position-dependent Anisotropic Near-field Reliability}

Fig.~\ref{fig:range_displacement_reliability} illustrates the resolvability of range displacements, where the exact Bhattacharyya distance \(B(\mathbf r_u,\mathbf r_u+s\mathbf e_x)\) in \eqref{eq:pairwise_bhattacharyya} is plotted versus the range displacement \(s\in[-1.0,1.0]\)~m at the diagonal anchors \(\mathbf r_u=(0,u,u)^{\mathrm T}\) with \(u\in\{0,0.25,0.5,0.75,1.0\}\)~m, and the pairwise threshold is \(B_{\mathrm{req}}=\log(1/\epsilon_p)/L=1.3816\) with \(\epsilon_p=10^{-3}\). It can be seen that as the anchor moves away from the \(x\)-axis, the Bhattacharyya distance increases and the threshold crossings move closer to the anchor, from \(s\approx-0.53\)~m and \(s\approx+0.66\)~m at \(u=0.25\)~m to \(s\approx\pm0.17\)~m at \(u=1.0\)~m. These results demonstrate that near-field range resolvability is determined by the absolute anchor position.

Fig.~\ref{fig:transverse_displacement_reliability} further demonstrates the resolvability of transverse displacements, where \(B(\mathbf r_v,\mathbf r_v+s\mathbf e_d)\) is plotted versus the displacement \(s\in[-0.3,0.3]\)~m along the \(yz\)-diagonal direction \(\mathbf e_d=(\mathbf e_y+\mathbf e_z)/\sqrt{2}\) at the axial anchors \(\mathbf r_v=(v,0,0)^{\mathrm T}\) with \(v\in\{-1.0,-0.5,0,0.5,1.0\}\)~m. It is observed that the threshold is crossed at \(\lvert s\rvert\approx36\)--\(53\)~mm, which is much smaller than the range crossing distances in Fig.~\ref{fig:range_displacement_reliability} and thus reveals the anisotropy of the near-field reliability geometry. Moreover, the crossing distance decreases monotonically as \(v\) decreases, since a scatterer closer to the array subtends a larger aperture angle and is therefore easier to be resolved transversely.

\begin{figure}[t]
    \centering
    \includegraphics[width=0.9\linewidth]{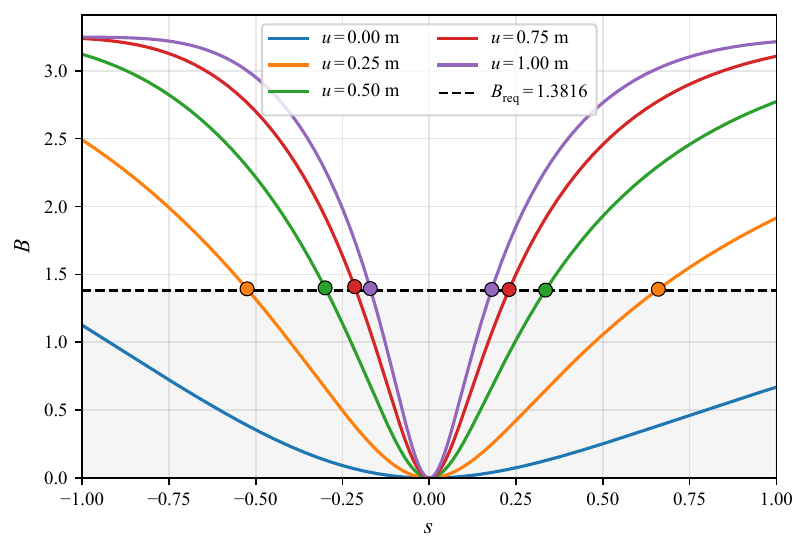}
    \caption{Bhattacharyya distance versus the range displacement at different diagonal anchors.}
\label{fig:range_displacement_reliability}
\end{figure}

\begin{figure}[t]
    \centering
    \includegraphics[width=0.9\linewidth]{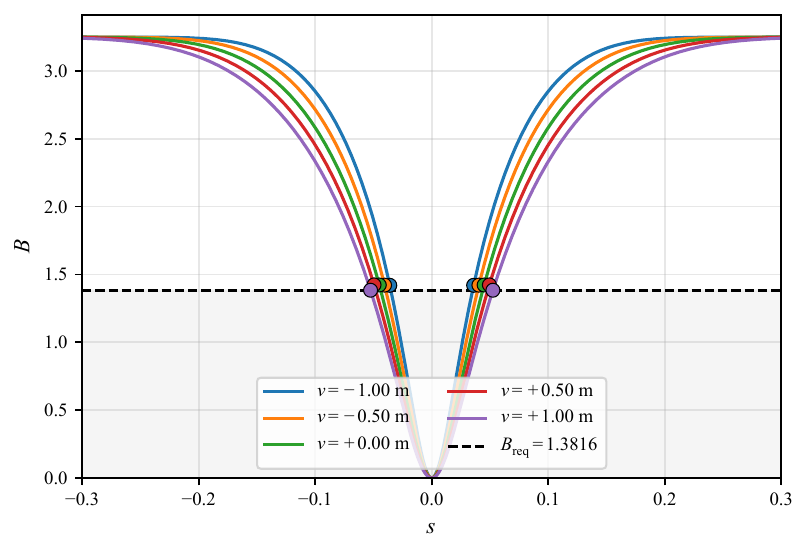}
    \caption{Bhattacharyya distance versus the transverse displacement at different axial anchors.}
\label{fig:transverse_displacement_reliability}
\end{figure}

\subsection{3D Ellipsoidal Reliability Geometry}

Fig.~\ref{fig:local_quadratic_accuracy} validates the accuracy of the ellipsoidal approximation in Proposition~\ref{prop:local_metric}, where the exact Bhattacharyya distance \(B(\mathbf r,\mathbf r+d\mathbf e_k)\) and the quadratic form \(d^2\mathbf e_k^{\mathrm T}\mathbf G(\mathbf r)\mathbf e_k\) are compared along the representative Cartesian directions \(k\in\{x,y\}\) at three anchors \(\mathbf r_1=(-0.8,0.8,0.2)^{\mathrm T}\)~m, \(\mathbf r_2=(0,0.5,0.5)^{\mathrm T}\)~m, and \(\mathbf r_3=(0.8,0.2,0.8)^{\mathrm T}\)~m. The results show that the two curves closely match in both directions at all three anchors. Moreover, the curves vary across the three anchors, which again demonstrates that the near-field reliability metric depends on the absolute anchor position rather than the displacement alone.

\begin{figure}[t!]
    \centering
\includegraphics[width=0.9\linewidth]{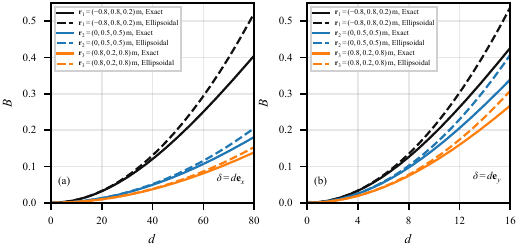}
    \caption{Exact Bhattacharyya distance and its local ellipsoidal approximation at different anchors along the x-direction and the y-direction.}
    \label{fig:local_quadratic_accuracy}
\end{figure}

\begin{figure}
    \centering
    \begin{subfigure}[t]{0.85\linewidth}
        \centering
\includegraphics[width=0.75\linewidth]{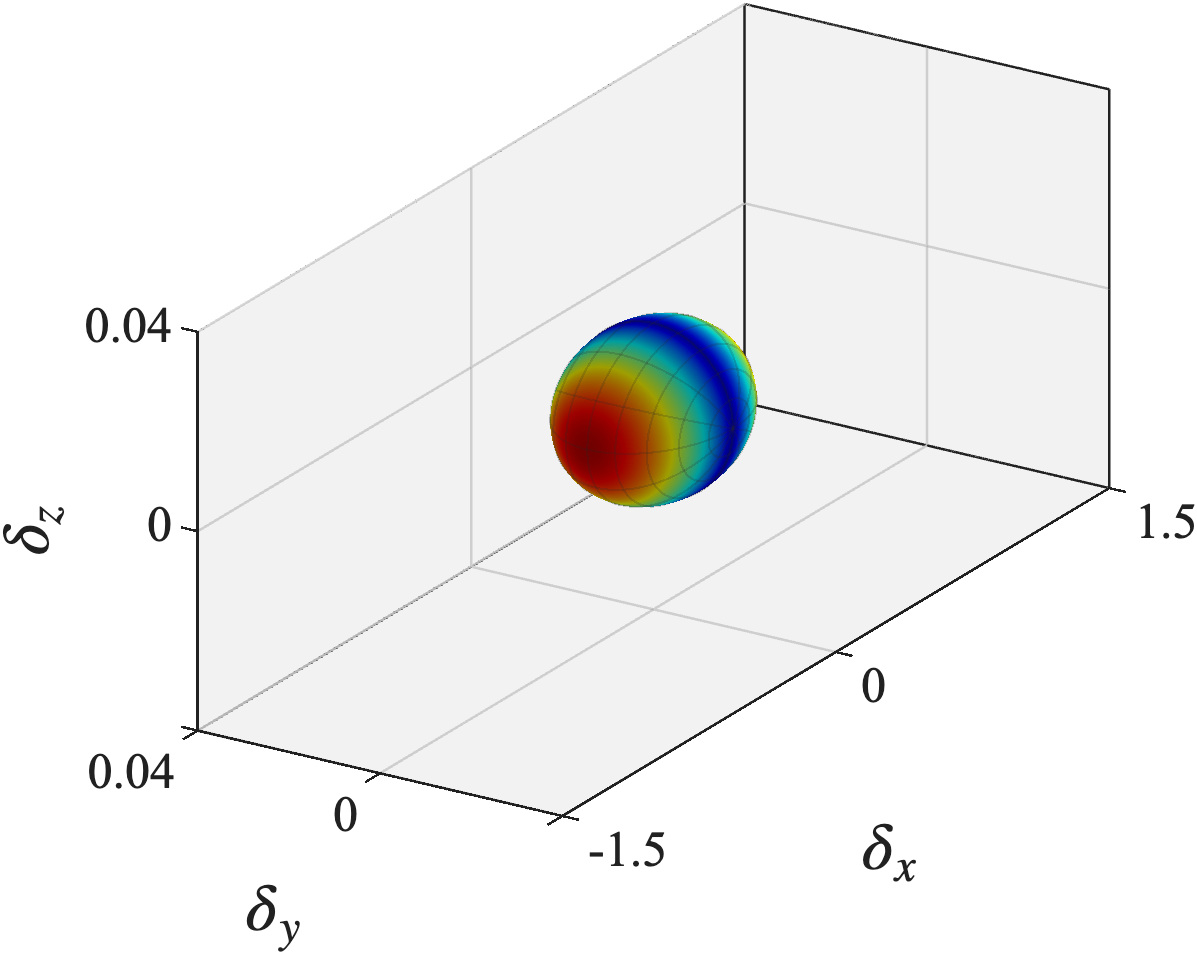}
        \caption{\(D=3\)~m.}
    \end{subfigure}
    \vspace{-0.6em}
    \begin{subfigure}[t]{0.85\linewidth}
        \centering
        \includegraphics[width=0.75\linewidth]{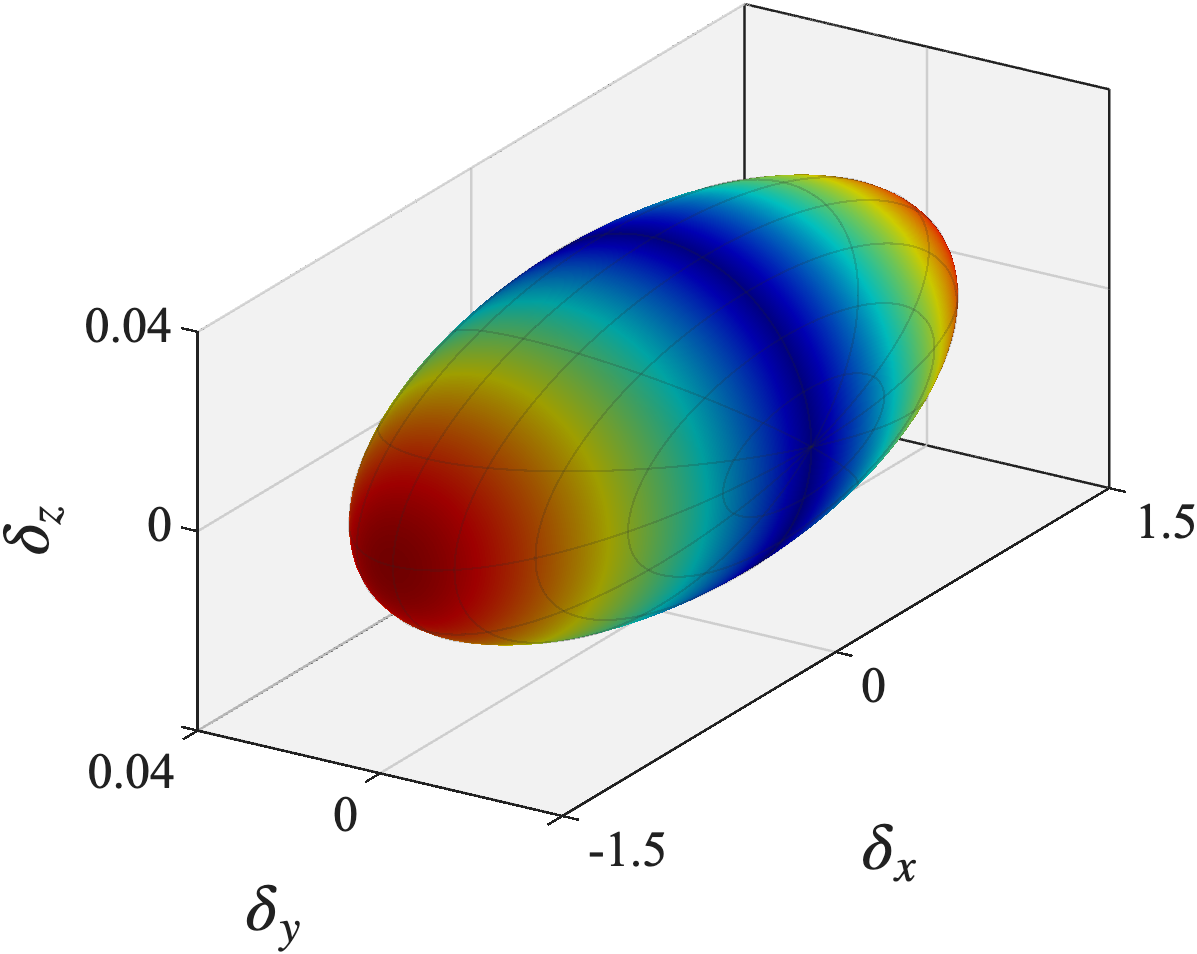}
        \caption{\(D=6\)~m.}
    \end{subfigure}
    \caption{3D confusability ellipsoids at the different distances.}
    \label{fig:ellipsoid_distance_evolution}
\end{figure}

Fig.~\ref{fig:ellipsoid_distance_evolution} visualizes the confusability ellipsoid in \eqref{eq:local_confusability_ellipsoid}, where the ellipsoid is evaluated at the  anchor \(\mathbf r=\mathbf 0\) for \(D=3\)~m and \(D=6\)~m. We can see that the principal axis along the range direction is much longer than the two transverse axes, which indicates that range displacements are much harder to resolve and agrees with the anisotropy revealed in Figs.~\ref{fig:range_displacement_reliability} and \ref{fig:transverse_displacement_reliability}. Moreover, as the controllable region moves away from the array, the range half-length grows much more dramatically than the transverse ones, which indicates a progressive transition toward the far-field region.

To further illustrate the near-to-far-field transition, Fig.~\ref{fig:range_3d} plots the range Bhattacharyya distance \(B(\mathbf r_u,\mathbf r_u+\mathbf e_x)\) versus the distance \(D\) and the anchor offset \(u\), where \(\mathbf r_u=(0,u,u)^{\mathrm T}\) is the diagonal anchor. It is observed that the Bhattacharyya distance decays rapidly with \(D\) and falls below \(B_{\mathrm{req}}\) when \(D\) is sufficiently large, which indicates that range displacements are no longer resolvable and the 3D spatial sensing degenerates to the 2D sensing over the transverse plane, as described by Proposition~\ref{prop:far_field_local_metric}.

\begin{figure}[t]
    \centering
    \includegraphics[width=0.9\linewidth]{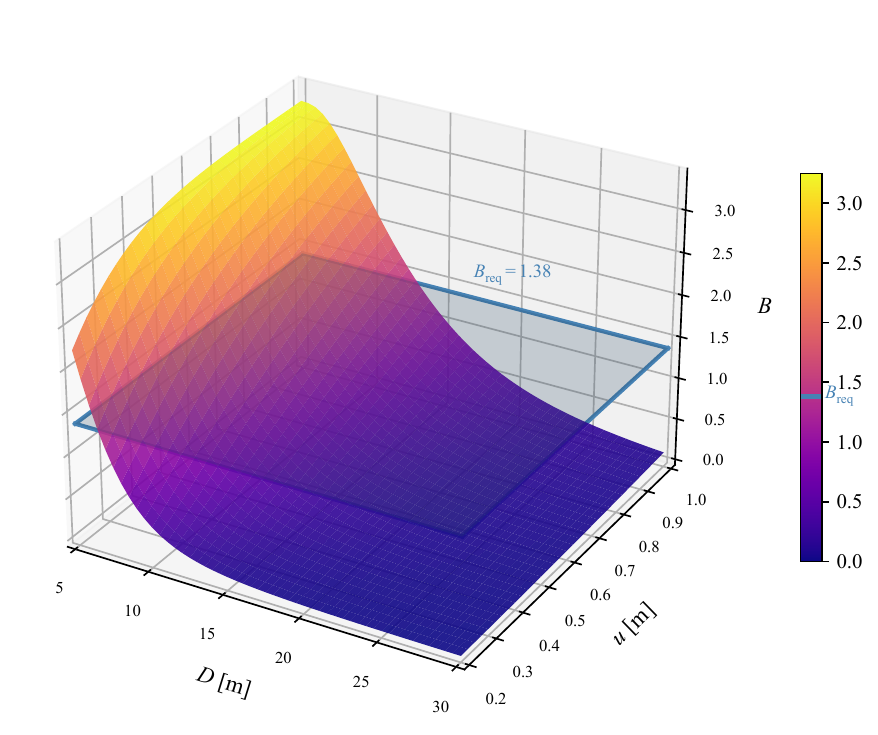}
    \caption{Range Bhattacharyya distance  versus the distance \(D\) and the anchor offset \(u\).}
    \label{fig:range_3d}
\end{figure}

\subsection{Comparison with Different Lattice Codebooks}

\begin{figure}[t]
    \centering
    \includegraphics[width=0.9\linewidth]{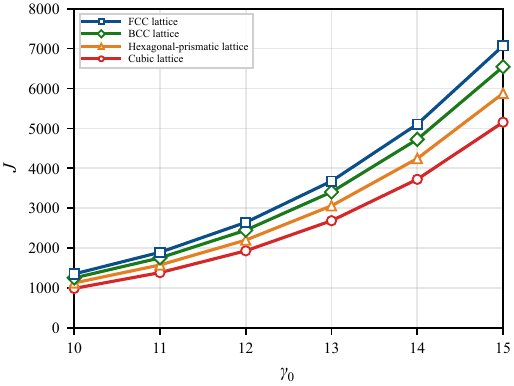}
    \caption{Codebook size \(J\) versus the sensing SNR \(\gamma_0\) for different lattice codebooks.}
    \label{fig:codebook_design_comparison}
\end{figure}

Fig.~\ref{fig:codebook_design_comparison} compares the achievable codebook size \(J\) of the proposed FCC codebook with three baselines versus the sensing SNR, where \(L=30\). Specifically, the BCC lattice is formed by adding a body-centered point to the cubic lattice, the hexagonal-prismatic lattice is generated by stacking hexagonal layers to form hexagonal prisms, and the cubic lattice places codewords directly on the vertices of the cubic. All four schemes share the same volume \(\mathcal V_G\) in \eqref{eq:metric_volume} and the same minimum-distance requirement \(d_J\) in \eqref{eq:fcc_minimum_distance}, and differ only in their fundamental-cell volumes. It can be seen the proposed FCC codebook achieves the largest alphabet over the entire SNR range, and its alphabet is approximately 37\% larger than that of the cubic packing. This is because a smaller fundamental cell permits more codewords within the same metric volume, which validates the FCC codebook design in Section~\ref{sec:codebook_design}.

\subsection{Near-field Capacity Gain}

\begin{figure}[t]
    \centering    \includegraphics[width=0.9\linewidth]{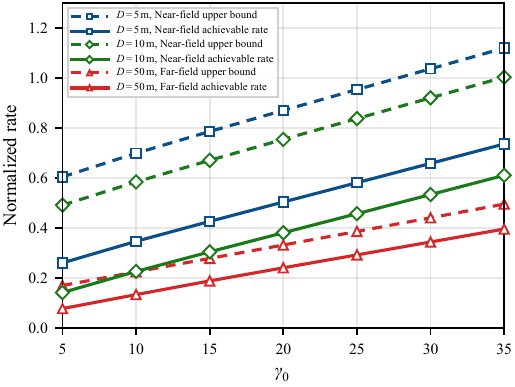}
    \caption{Achievable rate and converse bound versus the sensing SNR \(\gamma_0\) at different distances.}
    \label{fig:capacity_vs_snr}
\end{figure}

\begin{figure}[t]
    \centering
\includegraphics[width=0.9\linewidth]{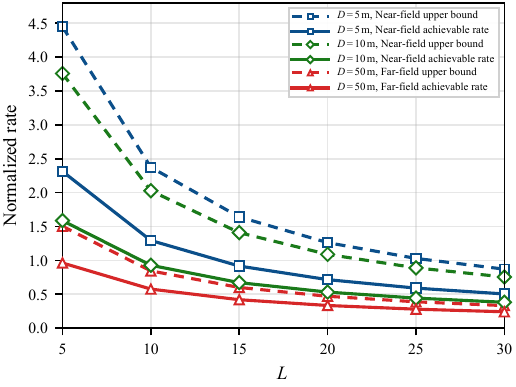}
    \caption{Achievable rate and converse bound versus the number of sensing snapshots \(L\) at different distances.}
    \label{fig:capacity_vs_snapshots}
\end{figure}

Fig.~\ref{fig:capacity_vs_snr} compares the achievable rate and the converse bound versus the sensing SNR \(\gamma_0\) for \(D\in\{5,10,50\}\)~m with \(\xi=0.3\) and \(L=30\). Since \(D=50\)~m lies beyond the Rayleigh distance, the corresponding rates are evaluated with the 2D plane packing. It is observed that both the achievable rate and the converse bound decrease monotonically as \(D\) increases over the entire SNR range, which indicates that a closer controllable region enjoys finer near-field resolvability and therefore a higher embodied information capacity.

Fig.~\ref{fig:capacity_vs_snapshots} further explores the effect of the number of sensing snapshots, where the achievable rate and the converse bound are plotted versus \(L\) for \(D\in\{5,10,50\}\)~m with \(\gamma_0=20\)~dB and \(\xi=0.3\). It is observed that the normalized rate decreases monotonically with \(L\) at all distances, since the total information \(\log_2J\) increases with \(L\) only sublinearly, which reveals that more snapshots support a larger embodied alphabet but yield less information per probing duration. Moreover, the rates at a closer distance remain higher for all \(L\), which indicates that the near-field capacity gain persists regardless of the number of sensing snapshots, as long as the controllable region is within the near-field region.

\section{Conclusion}
\label{sec:conclusion}

This paper developed a finite-snapshot theory of near-field embodied communication by connecting sensing-induced statistical distinguishability with the 3D spatial packing of physical symbols.

The main insight is that sensing resolution can be viewed as a communication resource. In the near field, this resource is genuinely volumetric but nonuniform: the communication value of a controllable region is determined not only by its physical size, but also by its location relative to the array and by the directional resolution available within it. Embodied communication systems should therefore be designed around the spatial distribution of sensing resolution, rather than Euclidean separation alone. The transition to the far field further validates that range resolution is the critical resource that separates volumetric embodied alphabets from planar ones.

More broadly, this work connects the evolution of sensing-capable wireless infrastructure with the emergence of embodied AI. As intelligent agents gain the ability to act on the physical world, their motions, configurations, and modifications of the environment can become information-bearing degrees of freedom. When these deliberately selected physical states are observed by sensing-capable infrastructure, communication can be embedded directly into physical interaction rather than treated solely as a separate waveform-level process. From this perspective, the environment becomes a shared interface linking perception, action, and communication, pointing toward intelligent systems in which embodied agents communicate not only by transmitting signals, but also by purposefully shaping the physical world.

\appendices

\section{Proof of Lemma~\ref{lem:bhattacharyya_bound}}
\label{app:bhattacharyya_bound}
\begin{proof}
For the given physical state \(\mathbf r_i\), the ML detector errs whenever \(p(\mathbf Y\mid\mathbf r_j)\ge p(\mathbf Y\mid\mathbf r_i)\), and the error region is given by
\begin{align*}
P_{i\rightarrow j}
&=
\int_{\{\mathbf Y:\,p(\mathbf Y\mid\mathbf r_j)\ge p(\mathbf Y\mid\mathbf r_i)\}}
p(\mathbf Y\mid\mathbf r_i)\,d\mathbf Y \nonumber\\
&\le
\int_{\{\mathbf Y:\,p(\mathbf Y\mid\mathbf r_j)\ge p(\mathbf Y\mid\mathbf r_i)\}}
\sqrt{p(\mathbf Y\mid\mathbf r_i)\,p(\mathbf Y\mid\mathbf r_j)}
\,d\mathbf Y \nonumber\\
&\le
\int
\sqrt{p(\mathbf Y\mid\mathbf r_i)\,p(\mathbf Y\mid\mathbf r_j)}
\,d\mathbf Y,
\end{align*}
where the first inequality follows from \(p(\mathbf Y\mid\mathbf r_i)\le\sqrt{p(\mathbf Y\mid\mathbf r_i)\,p(\mathbf Y\mid\mathbf r_j)}\) on the error region, and the last inequality follows from extending the integral to the full space. Since the \(L\) snapshots are i.i.d., the joint density factorizes as \(p(\mathbf Y\mid\mathbf r_k)=\prod_{\ell=1}^{L}p(\mathbf y_\ell\mid\mathbf r_k)\) for \(k\in\{i,j\}\), and hence
\begin{align*}
\int\!\!\!
\sqrt{p(\mathbf Y\mid\mathbf r_i)\,p(\mathbf Y\mid\mathbf r_j)}
\,d\mathbf Y
&\!=\!
\prod_{\ell=1}^{L}\!\!
\int_{\C^{M_{\mathrm v}}}\!\!\!\!
\sqrt{p(\mathbf y_\ell\mid\mathbf r_i)\,p(\mathbf y_\ell\mid\mathbf r_j)}
\,d\mathbf y_\ell
\nonumber\\
&=
\prod_{\ell=1}^{L}\operatorname{BC}(\mathbf r_i,\mathbf r_j)
=
\operatorname{BC}(\mathbf r_i,\mathbf r_j)^{L},
\end{align*}
where the last two equalities follow from the Bhattacharyya coefficient given by \eqref{eq:bhattacharyya_coefficient}, which is identical for every snapshot. This yields \eqref{eq:bhattacharyya_bound}.
\end{proof}

\section{Proof of Lemma~\ref{lem:gaussian_bhattacharyya}}
\label{app:gaussian_bhattacharyya}
\begin{proof}
For \(k\in\{i,j\}\), the complex Gaussian density under hypothesis \(\mathcal H_k\) in \eqref{eq:binary_hypotheses} is given by
\begin{align*}
p(\mathbf y\mid\mathbf r_k)
=
\frac{1}{\pi^{M_{\mathrm v}}\det(\mathbf R_k)}
\exp\!\left(
-\mathbf y^{\mathrm H}\mathbf R_k^{-1}\mathbf y
\right).
\end{align*}
Therefore,
\begin{align}
\sqrt{p(\mathbf y\mid\mathbf r_i)p(\mathbf y\mid\mathbf r_j)}
&=
\frac{
\exp\!\left[
-\frac{1}{2}\mathbf y^{\mathrm H}
(\mathbf R_i^{-1}+\mathbf R_j^{-1})
\mathbf y
\right]
}{
\pi^{M_{\mathrm v}}\sqrt{\det(\mathbf R_i)\det(\mathbf R_j)}
}.
\label{eq:gaussian_density}
\end{align}
Define \(\mathbf A_{ij}\triangleq(\mathbf R_i^{-1}+\mathbf R_j^{-1})/2\). Then, based on the  \eqref{eq:bhattacharyya_coefficient} and \eqref{eq:gaussian_density}, we have
\begin{align}
\operatorname{BC}_{ij}
&=
\frac{1}{\sqrt{\det(\mathbf R_i)\det(\mathbf R_j)}}
\int_{\C^{M_{\mathrm v}}}
\pi^{-M_{\mathrm v}}
\exp(-\mathbf y^{\mathrm H}\mathbf A_{ij}\mathbf y)
d\mathbf y \nonumber\\
&=
\frac{\det(\mathbf A_{ij})^{-1}}
{\sqrt{\det(\mathbf R_i)\det(\mathbf R_j)}},
\label{eq:gaussian_density_bc}
\end{align}
where the last equality follows from the standard complex Gaussian integral
\[
\int_{\C^{M_{\mathrm v}}}
\pi^{-M_{\mathrm v}}
\exp(-\mathbf y^{\mathrm H}\mathbf A_{ij}\mathbf y)
d\mathbf y
=
\det(\mathbf A_{ij})^{-1}.
\]
Moreover, since  \(\mathbf R_i^{-1}+\mathbf R_j^{-1}=\mathbf R_i^{-1}(\mathbf R_i+\mathbf R_j)\mathbf R_j^{-1}\), we have
\begin{align}
\det(\mathbf A_{ij})\!
=\!
\det\!\left(\frac{\mathbf R_i^{-1}+\mathbf R_j^{-1}}{2}\right)\! =\!
\frac{
\det\!\left((\mathbf R_i+\mathbf R_j)/2\right)
}{
\det(\mathbf R_i)\det(\mathbf R_j)
},
\label{eq:gaussian_density_det}
\end{align}
Substituting \eqref{eq:gaussian_density_det} into \eqref{eq:gaussian_density_bc} gives
\begin{align*}
\operatorname{BC}_{ij}
=
\frac{
\sqrt{\det(\mathbf R_i)\det(\mathbf R_j)}
}{
\det\!\left((\mathbf R_i+\mathbf R_j)/2\right)
}.
\end{align*}
Taking \(B(\mathbf r_i,\mathbf r_j)=-\log\operatorname{BC}_{ij}\) yields \eqref{eq:bhattacharyya_general}.
\end{proof}

\section{Proof of Theorem~\ref{thm:pairwise_reliability}}
\label{app:pairwise_reliability}
\begin{proof}
Applying Sylvester's determinant identity to the covariance matrix \(\mathbf R_i\) in \eqref{eq:state_covariance} gives
\begin{align}
\det(\mathbf R_i)
=
\sigma^{2M_{\mathrm v}}
\big(1+\gamma_0\mathbf a^{\mathrm H}(\mathbf r_i)\mathbf a(\mathbf r_i)\big)
\!=\!
\sigma^{2M_{\mathrm v}}(1+\gamma_0),
\label{eq:app_det_single_cov}
\end{align}
where the last equality follows from \(\|\mathbf a(\mathbf r_i)\|_2=1\), and the same expression holds for \(\det(\mathbf R_j)\).

Define \(\mathbf U_{ij}=[\mathbf a(\mathbf r_i),\mathbf a(\mathbf r_j)]\in\C^{M_{\mathrm v}\times 2}\), and then the averaged covariance can be expressed as
\begin{align*}
\frac{\mathbf R_i+\mathbf R_j}{2}
=
\sigma^2
\left(
\mathbf I_{M_{\mathrm v}}+\frac{\gamma_0}{2}\mathbf U_{ij}\mathbf U_{ij}^{\mathrm H}
\right).
\end{align*}
Applying Sylvester's determinant identity again gives
\begin{align}
\det\!\left(\frac{\mathbf R_i+\mathbf R_j}{2}\right)
&=
\sigma^{2M_{\mathrm v}}
\det\!\left(
\mathbf I_2+\frac{\gamma_0}{2}\mathbf U_{ij}^{\mathrm H}\mathbf U_{ij}
\right)
\nonumber\\
&=
\sigma^{2M_{\mathrm v}}
\left[
\left(1+\frac{\gamma_0}{2}\right)^2
-
\frac{\gamma_0^2}{4}\,\eta(\mathbf r_i,\mathbf r_j)
\right],
\label{eq:app_average_det}
\end{align}
where the second equality derives from
\begin{align*}
\mathbf I_2+\frac{\gamma_0}{2}\mathbf U_{ij}^{\mathrm H}\mathbf U_{ij}
=
\begin{bmatrix}
1+\frac{\gamma_0}{2} & \frac{\gamma_0}{2}\mathbf a^{\mathrm H}(\mathbf r_i)\mathbf a(\mathbf r_j)\\
\frac{\gamma_0}{2}\mathbf a^{\mathrm H}(\mathbf r_j)\mathbf a(\mathbf r_i) & 1+\frac{\gamma_0}{2}
\end{bmatrix}
\end{align*}
and the definition of \(\eta(\mathbf r_i,\mathbf r_j)=|\mathbf a^{\mathrm H}(\mathbf r_i)\mathbf a(\mathbf r_j)|^2\).

Finally, substituting \eqref{eq:app_det_single_cov} and \eqref{eq:app_average_det} into \eqref{eq:bhattacharyya_general}, we have
\begin{align*}
\frac{
\left(1+\frac{\gamma_0}{2}\right)^2-\frac{\gamma_0^2}{4}\eta
}{
1+\gamma_0
}
=
1+\frac{\gamma_0^2}{4(1+\gamma_0)}(1-\eta)
=
1+\kappa(1-\eta),
\end{align*}
where \(\kappa=\gamma_0^2/(4(1+\gamma_0))\). This yields \eqref{eq:pairwise_bhattacharyya}. Further substituting \eqref{eq:pairwise_bhattacharyya} into \eqref{eq:pairwise_error_bound} gives
\begin{align*}
P_{i\rightarrow j}
\le
\exp\!\big[-L B(\mathbf r_i,\mathbf r_j)\big]
=
\Big(
1+\kappa\big[1-\eta(\mathbf r_i,\mathbf r_j)\big]
\Big)^{-L},
\end{align*}
which is exactly \eqref{eq:pairwise_error_bound_closed}.
\end{proof}

\bibliographystyle{IEEEtran}
\bibliography{IEEEabrv,references}

\end{document}